\documentclass[11pt]{article}

\usepackage[letterpaper,margin=1in]{geometry}

\usepackage{array}
\usepackage{tabularx}
\usepackage[table]{xcolor}
\usepackage{float}
\usepackage{graphicx}
\usepackage{amsmath}
\usepackage{amssymb}
\usepackage{amsfonts}
\usepackage{amsthm}
\usepackage{leftindex}
\usepackage{multicol}
\usepackage{afterpage}
\usepackage[hidelinks]{hyperref}
\usepackage[nameinlink]{cleveref}
\usepackage{placeins}
\usepackage{caption}
\usepackage{subcaption}
\usepackage[percent]{overpic}
\usepackage{makecell}
\usepackage{epstopdf}
\usepackage[ruled,vlined,linesnumbered]{algorithm2e}

\newcommand{\IR}{\mathbb{I}\mathbb{R}}

\theoremstyle{plain}
\newtheorem{theorem}{Theorem}
\newtheorem{corollary}{Corollary}

\theoremstyle{remark}

\newenvironment{keywords}{\par\noindent\textbf{Keywords:} }{\par}

\title{Computational Complexity Analysis of Interval Methods in Solving Uncertain Nonlinear Systems}

\author{%
Rudra Prakash\thanks{Department of Electrical Engineering, Indian Institute of Technology Delhi, New Delhi--110016, India. Emails: \texttt{rudra.prakash@ee.iitd.ac.in}, \texttt{janas@ee.iitd.ac.in}, \texttt{shaunak.sen@ee.iitd.ac.in}.}\textsuperscript{ $\,\dagger$}
\and
S.~Janardhanan\footnotemark[1]
\and
Shaunak Sen\footnotemark[1]%
}

\date{}

\hypersetup{
  pdftitle={Computational Complexity of Interval Methods},
  pdfauthor={Rudra Prakash, S. Janardhanan, and Shaunak Sen}
}

\begin{document}

\begingroup
\renewcommand{\thefootnote}{\fnsymbol{footnote}}
\maketitle
\footnotetext[2]{Corresponding author.}
\endgroup
\setcounter{footnote}{0}
\renewcommand{\thefootnote}{\arabic{footnote}}

\begin{abstract}
This paper analyzes the computational complexity of validated interval methods for uncertain nonlinear systems and steady-state enclosure. Interval analysis produces guaranteed enclosures that account for uncertainty and round-off, but its adoption is often limited by computational cost in high dimensions. We develop an algorithm-level worst-case framework that makes explicit the dependence on the problem dimension $n$, the initial search region size $\mathrm{Vol}(X_0)$, the target tolerance $\varepsilon$, and the costs of validated primitives (inclusion-function evaluation, Jacobian evaluation, and interval linear algebra).
Within this framework, we derive worst-case time and space bounds for interval bisection, subdivision$+$filter, interval constraint propagation, interval Newton, and interval Krawczyk, and identify dominant cost drivers. We also show that the computation of the determinant and inverse of interval matrices via naive Laplace expansion exhibits factorial growth with increasing matrix dimension, motivating specialized interval linear algebra.
We complement the worst-case bounds with computational results on two application-motivated biochemical steady-state models (a Hill-type regulatory network and an enzyme-saturation-based winner-take-all circuit) in dimensions $n\in\{2,5,10\}$, including instances that process millions of boxes. The resulting analysis and experiments support the practical design of validated solvers for uncertainty-aware steady-state screening tasks such as robust operating-point certification and multistability assessment.
\end{abstract}

\begin{keywords}
computational time and space complexity, interval methods, uncertain nonlinear systems
\end{keywords}

\section{Introduction}
Mathematical modeling is central to understanding biomolecular systems in systems biology and synthetic biology~\cite{alon_introduction_2006,del_vecchio_biomolecular_2015}. A common description is through nonlinear ordinary differential equations (ODEs) that capture biochemical reaction dynamics. In many applications, qualitative and quantitative conclusions hinge on steady states and their dependence on parameters; consequently, considerable effort has been devoted to approximating or rigorously bounding steady-state solutions~\cite{mcbride_number_2020,streif_robustness_2016}.

A key obstacle is that nonlinear ODE models are rarely solvable in closed form, and numerical computation must contend with modeling uncertainty, parameter variability, and round-off errors. Interval analysis addresses these issues by replacing point estimates with intervals and propagating bounds through computations, thereby enabling validated conclusions~\cite{moore_introduction_2009}. Introduced by Ramon E.~Moore in the early 1960s, interval analysis has since matured into a broad toolkit for nonlinear equations, optimization, and differential equations~\cite{moore_introduction_2009,moore_methods_1979}. Its central objective is to produce guaranteed enclosures of the true solution while accounting for numerical uncertainties that conventional floating-point arithmetic alone cannot control~\cite{neumaier_interval_1990,tucker_validated_2023}.

Interval arithmetic operates on closed intervals and redefines operations (e.g., addition, multiplication, division) so that results contain all possible outcomes induced by the input ranges~\cite{moore_introduction_2009, jaulin_applied_2001, tucker_validated_2023}. This makes interval methods attractive for validated computations in domains including control engineering, robotics, chemical reaction network analysis, and systems and synthetic biology. In practice, steady-state enclosures are computed using a variety of methods, including interval bisection, subdivision$+$filter, interval constraint propagation, interval Newton, and interval Krawczyk~\cite{prakash_design_2024,chorasiya_quantitative_2023,jaulin_applied_2001,moore_introduction_2009,neumaier_interval_1990}.

Despite their rigor, interval methods can be computationally demanding, particularly for large-scale or highly nonlinear models. A principal driver is the curse of dimensionality: subdivision-based schemes may exhibit exponential growth in workload with the number of states or parameters~\cite{hansen_global_2004}. Additional overhead comes from evaluating interval extensions of nonlinear functions and their derivatives~\cite{moore_methods_1979}; interval arithmetic can be less efficient than floating-point arithmetic and may overestimate ranges due to dependency effects. Derivative-based strategies (e.g., interval Newton) and preconditioned variants (e.g., Krawczyk) add further cost through interval Jacobian evaluation and interval linear algebra, echoing classical cost drivers in large-scale nonlinear solvers (Newton/inexact Newton and nonlinear preconditioning)~\cite{eisenstat_walker_forcing_1996,cai_keyes_2002,cai_li_2011}. In safety-critical settings such as verified reachability~\cite{jaulin_applied_2001} and robust parameter estimation~\cite{neumaier_interval_1990}, understanding these costs is essential.

Computational complexity provides a principled lens for assessing practicality by relating time and memory requirements to an appropriate notion of input size~\cite{cormen_introduction_2022,arora_computational_2009}. For validated numerical algorithms, however, ``input size'' is not captured by dimension ($n$) alone: the cost also depends on the accuracy requirement (e.g., a tolerance $\varepsilon$), the size of the initial search region (e.g., $\mathrm{Vol}(X_0)$), and the costs of validated primitives (inclusion-function evaluation, Jacobian evaluation, and interval linear algebra). Interval methods are therefore compelling for safety-critical applications, but their performance cannot be characterised by a single, uniform complexity expression.

While the exponential worst-case behavior of subdivision (branch-and-bound) schemes is well known in general, existing results do not typically provide a unified, algorithm-level account for validated interval steady-state computation that makes explicit the dependence on the problem dimension $n$, the initial search-region size $\mathrm{Vol}(X_0)$, the target tolerance $\varepsilon$, and primitive costs. This paper addresses this gap.

\noindent\textbf{Contributions.} We derive worst-case time and space complexity bounds for representative interval methods used to compute rigorous enclosures of steady states. The bounds are stated in terms of the problem dimension $n$, $\mathrm{Vol}(X_0)$, $\varepsilon$, and the computational costs of validated primitives, enabling direct comparison of subdivision-based and derivative-based strategies (interval bisection, subdivision$+$filter, interval constraint propagation, interval Newton, and interval Krawczyk) and isolating dominant scalability bottlenecks. We also include numerical experiments on two application-motivated biochemical steady-state models that compare observed box counts and wall-clock runtimes across methods in dimensions $n\in\{2,5,10\}$, including instances that process up to $8.76\times 10^7$ boxes. These results are worst-case guarantees for the specific algorithms analyzed and are not claimed to be information-theoretic lower bounds for validated steady-state enclosure problems.

\noindent\textbf{Relevance to applications.} Validated steady-state enclosures arise directly in scientific and engineering workflows where qualitative behavior depends on equilibria under uncertainty. Beyond systems and synthetic biology (e.g., certifying multistability in gene circuits~\cite{gardner_construction_2000,elowitz_synthetic_2000}, bounding operating points in biochemical networks, and supporting robust parameter identification with interval-valued kinetics), the same validated root-finding primitives appear in robust control and safety verification (certified equilibrium regions and invariant sets~\cite{blanchini_set_1999,rakovic_invariant_2005}), robotics and localization (guaranteed pose/parameter estimation under bounded sensing errors~\cite{jaulin_interval_2000,kieffer_robust_2000}), and chemical/process engineering (steady-state and feasibility analysis of nonlinear reactor and flowsheet models with uncertain parameters~\cite{gau2002new,lin_lp_2004}). In such settings, the algorithmic choice is constrained not only by correctness guarantees but also by the wall-clock time required to process large numbers of boxes and to perform validated Jacobian and linear-algebra steps.

This paper investigates the worst-case computational time and space complexity of interval methods used to obtain rigorous enclosures of steady-state solutions, formulated as root-finding problems of the form
\begin{equation}
    f(x,u)=0, \quad x\in X_0, \quad u\in U.
\end{equation}
Here $x$ denotes the state variables and $u$ the uncertain parameters. The box $X_0$ is an initial search enclosure, and $U$ is an interval set specifying admissible parameter values.

Section~2 reviews interval analysis concepts and introduces the cost model and complexity notation used throughout. Section~3 derives worst-case time and space bounds for five representative interval methods, with an explicit accounting of primitive costs (interval evaluation, contraction, and interval linear algebra). Section~3 also reports numerical experiments that illustrate how observed workloads can deviate from conservative worst-case scaling. Section~4 concludes with implications and directions for scalable validated solvers.

\section{Background}
\subsection{Interval Analysis}
A key advantage of interval analysis is its ability to represent and propagate uncertainty. Unlike conventional floating-point computation, interval methods aim to produce \,\emph{validated} enclosures: the true (exact) quantity of interest is guaranteed to lie within the computed interval bounds.

At its core, interval analysis works with closed real intervals. Let $A=[\underline{a},\overline{a}]$ and $B=[\underline{b},\overline{b}]$ with endpoints in $\mathbb{R}$, and write $A,B\in\IR$. The elementary operations are lifted to intervals so that, for each operation $\circ\in\{+,-,\times,\div\}$, the interval result encloses all values $a\circ b$ with $a\in A$ and $b\in B$~\cite{jaulin_applied_2001}. When dividing by an interval that contains zero, extended interval arithmetic is required. An $n$-dimensional interval (box) is the Cartesian product $X=X_1\times\cdots\times X_n\in\IR^n$.

Interval analysis also provides interval extensions of functions, enabling guaranteed range enclosures over boxes. The following theorems summarize two central properties: inclusion isotonicity and range enclosure.
\begin{theorem}[\cite{moore_introduction_2009}]
Given a real-valued rational function $f$ and a natural interval extension $F$ such that $F(X)$ is well defined for some $X \in \IR^n$, we have:
\begin{enumerate}
    \item If $Y \subseteq Z \subseteq X$, then $F(Y) \subseteq F(Z)$ (\textit{inclusion isotonicity}).
    \item The range of $f$ over $X$ satisfies $R(f;X) \subseteq F(X)$ (\textit{range enclosure}).
\end{enumerate}
\end{theorem}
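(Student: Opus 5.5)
The argument is a structural induction on the expression tree (formula) that defines the rational function $f$ and, through the same tree, its natural interval extension $F$. The base cases are the leaves. A variable $x_i$ has extension $F(X)=X_i$, and a constant $c$ has extension $[c,c]$. Both claims are immediate for these: $Y_i\subseteq Z_i$ whenever $Y\subseteq Z$, and the range of $x\mapsto x_i$ over $X$ is exactly $X_i$.

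The inductive step covers a node $g=g_1\circ g_2$ with $\circ\in\{+,-,\times,\div\}$ and $G=G_1\circ G_2$. The key elementary fact is about the interval operations themselves. By the definition recalled above, $A\circ B=\{a\circ b: a\in A, b\in B\}$, computed exactly through endpoint formulas: $\min$/$\max$ over endpoint combinations for $\times$, and $A\times[1/\overline{b},1/\underline{b}]$ for $\div$ with $0\notin B$. Two consequences follow.

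\emph{Monotonicity of the operations:} if $A\subseteq A'$ and $B\subseteq B'$, then $A\circ B\subseteq A'\circ B'$. This holds because the left side is the image of a subset of $A'\times B'$ under the same map.

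\emph{Enclosure:} if $a\in A$ and $b\in B$, then $a\circ b\in A\circ B$.

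I would prove the two claims of the theorem in turn.

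For isotonicity, the induction hypothesis gives $G_k(Y)\subseteq G_k(Z)$. Monotonicity then yields $G(Y)\subseteq G(Z)$.

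For range enclosure, take any $x\in X$. By the hypothesis $g_k(x)\in R(g_k;X)\subseteq G_k(X)$, so the enclosure property gives $g(x)=g_1(x)\circ g_2(x)\in G_1(X)\circ G_2(X)=G(X)$. Since $x$ was arbitrary, $R(g;X)\subseteq G(X)$.

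The main obstacle is well-definedness, especially for division. The hypothesis that $F(X)$ is well defined means that at every division node the denominator's interval extension over $X$ does not contain $0$. In the isotonicity proof I must also check that this transfers to subboxes. By the already established isotonicity of the denominator subtree, $G_2(Y)\subseteq G_2(Z)\subseteq G_2(X)\not\ni 0$, so $F(Y)$ and $F(Z)$ are well defined too. This is why the induction should prove the statement for all subboxes of $X$ simultaneously, carrying well-definedness along with isotonicity. For range enclosure the same fact ensures $g_2(x)\neq 0$ for every $x\in X$, so $f$ is defined on all of $X$ and $R(f;X)$ makes sense. Finally, I would remark that outward rounding only enlarges results, so it preserves both inclusions.
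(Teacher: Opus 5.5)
Your proof is correct. The paper does not prove this theorem; it quotes it from \cite{moore_introduction_2009}. Your structural induction, driven by inclusion isotonicity of the four interval operations, is the argument used there for the first claim.

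For the second claim the cited source is shorter. A natural extension reproduces $f$ on degenerate boxes, so for $x\in X$ isotonicity with $Y=[x,x]$ and $Z=X$ gives $f(x)\in F([x,x])\subseteq F(X)$. Your second induction via the pointwise property $a\circ b\in A\circ B$ is equivalent, since that property is just the operations' isotonicity applied to degenerate arguments. The corollary route does show more, though: range enclosure holds for any inclusion-isotonic interval extension, not only natural ones. Your transfer of well-definedness to subboxes, through isotonicity of the denominator subtrees, is correct, and the textbook leaves that step implicit.

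Two steps need one more line of justification.

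First, the paper only says the interval operations \emph{enclose} all $a\circ b$. Your monotonicity argument needs the equality $A\circ B=\{a\circ b : a\in A,\ b\in B\}$, and the enclosure property alone does not give it. It does follow from the endpoint formulas:
connectedness handles $\pm$;
bilinearity puts the extremes of $ab$ at the corners of $A\times B$;
when $0\notin B$, the map $b\mapsto 1/b$ sends $B$ monotonically onto $[1/\overline{b},1/\underline{b}]$.

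Second, in your rounding remark, enlargement alone preserves enclosure but not isotonicity. Isotonicity survives because directed rounding is a monotone function of the exact endpoints.
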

Despite its strengths, interval analysis presents challenges, primarily due to overestimation caused by the dependency problem, which arises when the same variable appears multiple times in an expression. A standard solution is subdivision: by evaluating $F$ on smaller sub-intervals and combining the results, the enclosure can be made arbitrarily sharp.
\begin{theorem}[\cite{moore_introduction_2009}]
Consider $f:I\rightarrow \mathbb{R}$, where $f$ is Lipschitz, and let $F$ be an inclusion isotonic interval extension of $f$ such that $F(X)$ is well defined for some $X\subseteq I$. Then there exists a constant $k>0$ (depending on $F$ and $X$) with the following property: if $X$ is partitioned into sub-intervals $X_1,\ldots,X_m$ and we denote their interval sum by $\sum_{i=1}^m X_i$, then
\[
R(f;X) \subseteq \sum_{i=1}^{m} F(X_i) \subseteq F(X)
\]
and the radius satisfies
\[
\operatorname{rad}\Bigl(\sum_{i=1}^{m} F(X_i)\Bigr) \le \operatorname{rad}(R(f;X)) + k\,\max_{i=1,\ldots,m}\operatorname{rad}(X_i).
\]
\end{theorem}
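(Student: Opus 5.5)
The plan is to prove the two inclusions first and then the radius bound, using the Lipschitz hypothesis on the interval extension. Interpret $\sum_{i=1}^m F(X_i)$ as the interval hull (union) of the $F(X_i)$, which is the standard meaning in Moore's refinement theorem: its lower endpoint is $\min_i \underline{F(X_i)}$ and its upper endpoint is $\max_i \overline{F(X_i)}$.

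\textbf{The two inclusions.} For the first, note that the $X_i$ cover $X$ and each satisfies $R(f;X_i)\subseteq F(X_i)$ by range enclosure (the earlier theorem). Hence
\[
R(f;X)=\bigcup_i R(f;X_i)\subseteq \bigcup_i F(X_i)\subseteq \sum_i F(X_i).
\]
For the second, use $X_i\subseteq X$ and inclusion isotonicity to get $F(X_i)\subseteq F(X)$ for every $i$. Since $F(X)$ is an interval, it contains the hull of the $F(X_i)$.

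\textbf{The radius bound.} This is the main step. I would use that $F$ is a Lipschitz interval extension on $X$ (the standard assumption under which this theorem holds; a natural extension of a Lipschitz rational $f$ satisfies it). That is, there is $L$ with $w(F(Y))\le L\,w(Y)$ for all $Y\subseteq X$, where $w$ denotes width.

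\begin{itemize}
\item Choose an index $j$ attaining the upper endpoint $\overline{F(X_j)}=\max_i\overline{F(X_i)}$.
\item Pick any point $x_j\in X_j$. Since $f(x_j)\in F(X_j)$, we get $\overline{F(X_j)}\le f(x_j)+w(F(X_j))\le \max R(f;X)+L\,w(X_j)$.
\item The same argument at the lower endpoint gives $\min_i\underline{F(X_i)}\ge \min R(f;X)-L\max_i w(X_i)$.
\item Subtracting and halving yields $\operatorname{rad}(\sum_i F(X_i))\le \operatorname{rad}(R(f;X))+L\max_i w(X_i)$, that is, the claim with $k=2L$ in terms of radii.
\end{itemize}

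\textbf{Main obstacle.} The obstacle is justifying the Lipschitz property of $F$ itself, not just of $f$. If the paper intends $F$ to be the natural extension of a rational function, I would prove $w(F(Y))\le L\,w(Y)$ by induction on the expression tree. The base case is variables and constants. For the inductive step, the interval operations $+,-,\times$ satisfy width bounds of the form $w(A\circ B)\le c\,(w(A)+w(B))$ on bounded sets. Division also satisfies such a bound, provided denominators stay bounded away from zero on $X$, which is where well-definedness of $F(X)$ is used. The constant $L$, and hence $k$, then depends on $F$ and $X$, as the statement allows.
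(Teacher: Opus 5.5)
Your argument is correct, and it is essentially the standard refinement proof from the cited source; the paper itself only cites the result and gives no proof. The two inclusions follow from pointwise enclosure plus isotonicity, with $\sum$ read as the interval hull. That is the only reading under which $\sum_i F(X_i)\subseteq F(X)$ can hold. The radius bound follows from the one-sided endpoint estimate using $w(F(X_j))\le L\,w(X_j)$, which gives $k=2L$.

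It is worth making explicit that the hypothesis you add, that $F$ itself is Lipschitz, is not merely standard but necessary. As written, with only $f$ Lipschitz and $F$ an arbitrary inclusion isotonic extension, the statement is false. Take $X=[0,1]$, $f\equiv 0$, and $F(Y)=[0,0]$ for degenerate $Y$ and $F(Y)=[-1,1]$ otherwise. This $F$ is an inclusion isotonic interval extension of $f$, yet $\operatorname{rad}\bigl(\sum_i F(X_i)\bigr)=1$ for every partition, while $\operatorname{rad}(R(f;X))=0$ and $\max_i\operatorname{rad}(X_i)\to 0$.

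So your restriction is the right repair: either assume $F$ Lipschitz, or restrict to natural extensions of rational functions, where your expression-tree induction is exactly the classical lemma that such extensions are Lipschitz. One small detail: in the general Lipschitz, inclusion isotonic case, obtain $f(x)\in F(X_i)$ from $F([x,x])=[f(x),f(x)]$ and isotonicity. Citing the earlier range-enclosure theorem is not enough there, because it only covers natural extensions of rational functions.
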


Because the presence of multiple solutions, overestimated bounds, and the dependency issues described above often require using finer subdivisions to maintain accuracy, it is essential to understand the computational complexity of these operations in order to evaluate whether they are feasible in practical, real-world systems.

\subsection{Computational Complexity}
Time and space complexity quantify, respectively, the number of elementary operations and the amount of memory required by an algorithm as a function of an appropriate notion of input size~\cite{cormen_introduction_2022}. Throughout this paper, we use Big-$O$ notation to state worst-case upper bounds, counting primitive operations such as interval arithmetic, inclusion-function evaluations, Jacobian evaluations, and interval linear algebra steps~\cite{cormen_introduction_2022}.

For validated numerical algorithms, ``input size'' is not captured by dimension alone. In addition to the state dimension $n$, the computational effort typically depends on an accuracy requirement (e.g., a tolerance $\varepsilon$), the size of the initial search region (e.g., $\mathrm{Vol}(X_0)$ or related width/diameter measures), and the cost of the underlying validated primitives (e.g., the cost $C_F$ of evaluating an inclusion function). Space complexity is driven by the storage of interval boxes, trees/queues generated by subdivision, and intermediate interval vectors and matrices. Making these dependencies explicit is essential for assessing scalability and for identifying the algorithmic bottlenecks that limit the applicability of interval methods in high-dimensional, safety-critical settings.

\section{Results and Discussion}
Building on the background and notation above, we now derive worst-case time and space bounds for representative interval methods. We begin by fixing a simple cost model for interval arithmetic and for the validated primitives used by each method.

\paragraph{Cost assumptions and symbols} We state worst-case bounds in Big-$O$ form by treating a scalar real arithmetic operation as constant cost, with unit cost denoted by $t$. We account for (i) scalar interval arithmetic operations, (ii) evaluations of an inclusion function $F(X)$, (iii) evaluations of an interval Jacobian $J(X)$, and (iv) linear-algebra steps used within a validated computation.
Specifically, $C_F$ denotes the cost of one inclusion-function evaluation $F(X)$, $C_J$ denotes the cost of computing an interval Jacobian $J(X)$, $C_{J^{-1}}$ denotes the cost of computing (an enclosure of) the inverse of an interval Jacobian, and $N_{\mathrm{it}}$ denotes the maximum number of Newton/Krawczyk iterations performed per sub-box.
Unless stated otherwise, the total cost of processing a box is given by a combination of these primitives and scales with the number of sub-boxes needed to reach the tolerance $\varepsilon$.

\begin{table}[h!]
\centering
\captionsetup{font=normalsize}
\caption{Time complexity comparison: interval arithmetic vs. real arithmetic.}

\begingroup
\small\selectfont
\setlength{\tabcolsep}{4pt}
\renewcommand{\arraystretch}{0.95}
\begin{tabular}{|>{\centering\arraybackslash}m{4.5cm}|>{\centering\arraybackslash}m{1.5cm}|>{\centering\arraybackslash}m{2.3cm}|>{\centering\arraybackslash}m{1.5cm}|>{\centering\arraybackslash}m{2.3cm}|}
\hline
\rowcolor{lightgray!25}
\textbf{Operation} & \textbf{Scalar Real} & \textbf{\(n\)-Dimensional Real Vector Cost} & \textbf{Scalar Interval} & \textbf{$n$-Dimensional Interval Vector} \\\hline
Addition $(A + B)$ & $O(t)$ & $O(tn)$ & $O(2t)$ & $O(2tn)$ \\\hline
Subtraction $(A - B)$ & $O(t)$ & $O(tn)$ & $O(2t)$ & $O(2tn)$ \\\hline
Multiplication $(A \times B)$ & $O(t)$ & $O(tn)$ & $O(10t)$ & $O(10tn)$ \\\hline
Division $(A \div B; 0 \notin B)$ & $O(t)$ & $O(tn)$ & $O(12t)$ & $O(12tn)$  \\\hline
\end{tabular}
\endgroup

\label{table:int_arith_cost}
\end{table}
In Table~\ref{table:int_arith_cost}, the notation $O(t)$ represents the cost of one scalar real arithmetic operation. The constant factors shown for the interval operations come from counting the underlying real-endpoint arithmetic; the corresponding operation-count derivations are provided in Appendix~\ref{app:interval_arithmetic}. These constants are implementation-dependent; throughout the remainder we primarily track the asymptotic scaling in $n$, $\varepsilon$, and $\mathrm{Vol}(X_0)$.

\subsection{Interval Bisection Method}
Interval bisection is the validated analogue of classical bisection for enclosing the roots of a nonlinear system. Starting from an initial box $X_0$, the method recursively splits boxes and retains only those for which the interval evaluation cannot exclude a root (i.e., $0\in F(X,U)$). The recursion terminates when the box diameter is at most $\varepsilon$. Algorithm~\ref{algo:recursive_bisection} summarizes the method~\cite{prakash_design_2024,prakash_rigorous_2025}.

\begin{algorithm}[htbp]
    \caption{Recursive Interval Bisection}
    \label{algo:recursive_bisection}

    \DontPrintSemicolon

    \KwIn{$F$, $X_0$, $U$, $\varepsilon$.}
    \KwOut{Set}

    \SetKwFunction{IntervalBisection}{IntervalBisection}
    \SetKwProg{Fn}{Function}{}{}

    \Fn{\IntervalBisection$(X)$}
    {
        \If{$0 \in F(X, U)$}
        {
            \eIf{$\operatorname{diam}(X) \leq \varepsilon$}
            {
                \text{Save $X$ in Set}\;
            }
            {
                \text{Bisect $X$ into $X_{left}$ and $X_{right}$}\;
                $\IntervalBisection(X_{left})$\;
                $\IntervalBisection(X_{right})$\;
            }
        }
        \KwRet{$Set$}\;
    }

    ${\IntervalBisection}(X_0)$
\end{algorithm}

The initial search space is either proven not to contain any roots (in which case it is discarded) or it is bisected and kept for further study. The output is a collection of sub-boxes whose union encloses all roots of $f$ within $X_0$ (validated via the inclusion test $0\in F(X,U)$).

\paragraph{Assumptions and enclosure goal for the complexity analysis} We consider the steady-state (root-finding) problem $f(x,u)=0$ with $x\in X_0$ and $u\in U$ (as introduced in the Introduction), and focus on validated \emph{enclosure of all} steady states in $X_0$ (rather than existence/uniqueness of a single solution). We assume that $f$ is continuously differentiable in $x$ on the region of interest so that the Jacobian exists and can be bounded. Subdivision-based methods terminate when every retained sub-box has width (diameter) at most $\varepsilon$ in each coordinate; derivative-based methods may additionally terminate early on a sub-box when a contraction/verification test succeeds or fails, but a maximum of $N_{\mathrm{it}}$ iterations is enforced per sub-box.

\paragraph{Notation} $X_0$ denotes an $n$-dimensional search box (interval box), $X_0 = X_1 \times X_2 \times \ldots \times X_n \in \IR^n$. Let $\mathrm{Vol}(X_0)$ denote its volume, which we use as a measure of the size of the initial search region. The symbol $\varepsilon$ denotes the prescribed tolerance specifying the desired precision of the final interval boxes. The quantity $w(X_0)$ denotes the maximum width of its coordinate intervals (i.e., $\max_{i=1,\dots,n}(\overline{X_i}-\underline{X_i})$). In the context of interval boxes, the terms ``width'' and ``diameter'' are used synonymously. Finally, $n$ denotes the state dimension (number of variables).
\begin{theorem}\label{thm:CF_cost}
Let $f : \mathbb{R}^n \to \mathbb{R}^n$ be specified by an expression tree consisting of $k$ elementary scalar operations/functions (e.g., $+,-,\times,\div,\sin,\exp$). Let $F : \IR^n \to \IR^n$ be the natural interval extension obtained by replacing each elementary scalar operation by its interval counterpart.
Assume each scalar interval operation has worst-case cost $O(c)$ under the chosen cost model.
Then evaluating $F(X)$ on a box $X\in\IR^n$ has worst-case cost $T_F = O(ck)$.
\end{theorem}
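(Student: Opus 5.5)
The plan is to prove the bound by structural induction on the expression tree, or equivalently by a single pass over the nodes in topological order. Natural interval extension replaces each node by its interval counterpart. So evaluating $F(X)$ amounts to visiting every node of the (shared) expression DAG once, in an order where children precede parents. At each node we apply one scalar interval operation to the already computed interval values of its children. The work splits into three parts: reading the input box $X$ (the leaves $X_1,\dots,X_n$ and constants), performing the $k$ interval operations, and collecting the $n$ output components.

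The key step is the per-node bound. By hypothesis each scalar interval operation (for instance those in Table~\ref{table:int_arith_cost}, which cost $O(2t)$, $O(10t)$ or $O(12t)$, and likewise for elementary functions such as $\sin$ and $\exp$ with validated outward rounding) costs at most $O(c)$. Each operation has bounded arity, at most two arguments, so fetching its operands and storing its result adds only $O(1)$ overhead. That overhead is absorbed into $O(c)$, since $c\ge t$ is at least a unit operation. Summing over the $k$ operation nodes gives $O(ck)$. I will state this as an induction: the cost of evaluating a subtree with $k'$ operation nodes is at most $c'k'$ for a fixed constant $c'=O(c)$. The leaf case is $k'=0$, where we assign the stored interval at zero or constant cost. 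The inductive step adds one node's cost to the costs of its children's subtrees. With shared subexpressions we memoize, so each node is evaluated once and the count stays at $k$.

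The main obstacle is bookkeeping at the leaves and outputs: reading the $n$ input intervals and producing $n$ output components costs $O(n)$, which is not obviously $O(ck)$. I would handle this by a mild convention. Either each output component $f_i$ is the root of a nontrivial tree, or projecting and copying a coordinate counts as an elementary operation. Under either convention $n\le k$, and the $O(n)$ overhead is absorbed. I will state this convention explicitly. The remaining caveat is extended division when $0$ lies in the denominator. Since the theorem assumes every operation has worst-case cost $O(c)$ under the cost model, I will note that any such case is included in that assumption rather than treated separately.
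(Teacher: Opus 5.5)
Your proposal is correct and follows essentially the same argument as the paper: evaluating $F$ executes the same $k$ elementary operations as $f$, each replaced by an interval counterpart of cost $O(c)$, which gives $k\cdot O(c)=O(ck)$. Your additions are refinements of that same counting argument, not a different route: the structural induction, the memoization for shared subexpressions, and the explicit convention (each output $f_i$ rooted at its own nontrivial node, so $n\le k$) that absorbs the $O(n)$ input/output bookkeeping. That convention matters in the degenerate case $k=0$, e.g.\ the identity map, which the paper's one-line proof silently ignores.
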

\begin{proof}
By construction, $F$ executes the same sequence of $k$ elementary operations as $f$, with each real operation replaced by the corresponding interval operation. Under the assumption that each scalar interval operation costs $O(c)$, the total evaluation cost is at most $k\cdot O(c)=O(ck)$.
\end{proof}

Throughout the paper, we use the shorthand
\(\label{eqn:C_F}
C_F := O(ck).
\)

\textit{Example:} Consider the mapping \( f : \mathbb{R}^2 \to \mathbb{R}^2 \) defined by
\[
f(x_1, x_2) = \bigl(f_1(x_1, x_2), \ f_2(x_1, x_2)\bigr) = \bigl(x_1 + x_2,\; x_1(1 + x_2)\bigr),
\]
and let the domain (interval box) be
\(
X_0 = [1, 2] \times [3, 4].
\)
The image of \(X_0\) under \(f\) can be represented as
\[
F(X_0) = \bigl(F_1(X_0), \ F_2(X_0)\bigr),
\]
where interval arithmetic is used to compute the range of each component over \(X_0\).

For the first component,
\(
F_1(X_0) = [1, 2] + [3, 4] = [4, 6].
\)
For the second component,
\(
F_2(X_0) = [1, 2]\bigl([1, 1] + [3, 4]\bigr) = [1, 2] [4, 5] = [4, 10].
\)
Thus, the interval enclosure of the image of \(X_0\) under \(f\) is
\[
F(X_0) = \bigl(F_1(X_0), \ F_2(X_0)\bigr) = \bigl([4, 6], \ [4, 10]\bigr).
\]
The construction of the interval extension \(F\) of a real-valued function \(f\) involves three scalar interval-arithmetic operations in total: two interval additions and one interval multiplication. Consequently, the total computational cost can be expressed as \(C_F = O(3)O(c) = O(3c)\), where \(O(c)\) denotes the assumed maximum cost of a single scalar interval operation.

In this simple example, we can explicitly count the underlying real-valued operations: the two interval additions together require four real operations, while the interval multiplication requires ten real operations; see Table~\ref{table:int_arith_cost}. For more complicated representations of \(f\), however, it is not practical to enumerate and classify all interval operations and their specific types. For this reason, we adopt the simplifying assumption that each scalar interval operation incurs a cost of \(O(c)\).

\begin{corollary}\label{cor:diam_cost}
Let $X = X_1 \times \cdots \times X_n \in \IR^n$ denote an $n$-dimensional interval box, where each component interval is given by $X_i = [\underline{X}_i, \overline{X}_i]$. Define,
\[
\operatorname{diam}(X) := \max_{1 \le i \le n} \operatorname{diam}(X_i), \qquad \operatorname{diam}(X_i) := \overline{X}_i - \underline{X}_i.
\]
Under the assumption that scalar arithmetic operations and comparisons have unit computational cost, the value $\operatorname{diam}(X)$ can be evaluated in $O(n)$ time.
\end{corollary}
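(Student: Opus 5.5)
The plan is to exhibit the obvious straight-line algorithm and count its operations. The approach mirrors the proof of Theorem~\ref{thm:CF_cost}: fix a sequence of elementary operations, charge each one unit cost, and sum.

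The algorithm makes a single pass over the coordinates. For each $i=1,\dots,n$ it computes $d_i=\overline{X}_i-\underline{X}_i$, which costs one real subtraction. It also maintains a running maximum $M$, initialized as $M=d_1$ and updated by $M\leftarrow\max(M,d_i)$ for $i\ge 2$; each update is one comparison plus possibly one assignment. After the loop, $M=\max_i \operatorname{diam}(X_i)=\operatorname{diam}(X)$ by a trivial induction on $i$, with invariant $M=\max_{j\le i} d_j$. This establishes correctness.

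For the cost, the loop performs exactly $n$ subtractions and $n-1$ comparisons. Under the unit-cost assumption, the total is $2n-1+O(1)$ operations, where the $O(1)$ covers initialization and loop overhead. This gives $O(n)$.

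There is no real obstacle. The only point needing care is to state explicitly that endpoint access and assignment are also constant cost, so that the bookkeeping does not add hidden terms. One could also remark that $\Omega(n)$ is necessary, since every coordinate must be read, so the bound is tight; this is optional.
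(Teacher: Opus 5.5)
Your proposal is correct and matches the paper's proof: $n$ scalar subtractions for the component diameters plus $n-1$ comparisons for the maximum, giving $O(n)$ total. The running-maximum formulation, the loop invariant for correctness, and the optional $\Omega(n)$ tightness remark are harmless additions that the paper does not include.
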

\begin{proof}
Compute $\operatorname{diam}(X_i)$ for $i=1,\dots,n$ using $n$ scalar subtractions, and then compute their maximum using $n-1$ comparisons. The total number of elementary operations is therefore linear in $n$, i.e., $O(n)$.
\end{proof}
\subsubsection{Time Complexity}
\begin{theorem}\label{thm:bisection_tc}
Under uniform subdivision until the coordinate-wise widths are $\leq\varepsilon$ (so each terminal box has volume $\leq\varepsilon^n$), the worst-case computational time complexity of interval bisection (Algorithm~\ref{algo:recursive_bisection}) for enclosing all roots of \( f(x,u) = 0 \) with \(x \in X_0\) and \(u \in U\) is
\[
O\!\left( (C_F+n)\,\frac{\mathrm{Vol}(X_0)}{\varepsilon^n} \right).
\]
\end{theorem}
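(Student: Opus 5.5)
The argument splits the running time into the number of calls to \texttt{IntervalBisection} (nodes of the recursion tree) times the worst-case cost per call, and then bounds the node count by a volume argument. The worst case is when the inclusion test $0\in F(X,U)$ never excludes a box, since any exclusion only prunes subtrees. In that case the recursion tree is a full binary tree whose leaves are exactly the terminal boxes produced by uniform subdivision.

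The per-call cost comes first. A call performs one inclusion test, costing $C_F$ by Theorem~\ref{thm:CF_cost}; the membership check $0\in F(X,U)$ costs $O(n)$ comparisons. It then evaluates $\operatorname{diam}(X)$, which costs $O(n)$ by Corollary~\ref{cor:diam_cost}. If it bisects, it finds the widest coordinate and copies two child boxes, which is again $O(n)$. Saving $X$ costs $O(n)$. Hence each node costs $O(C_F+n)$.

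Next comes the node count. Under uniform subdivision, with boxes split along the widest coordinate cyclically, every box at a given depth is congruent, and the children partition the parent into two halves of equal volume. Recursion stops once all coordinate widths are $\le\varepsilon$. The leaves therefore tile $X_0$ into $N_{\mathrm{leaf}}$ congruent boxes. I will aim for the bound $N_{\mathrm{leaf}}\le 2^n\,\mathrm{Vol}(X_0)/\varepsilon^n$ rather than the equality $\mathrm{Vol}(X_0)/\varepsilon^n$. The reason is that each leaf's parent still had some width $>\varepsilon$, so after halving, each coordinate of a leaf has width $>\varepsilon/2$ (or the coordinate was never split). Each leaf thus has volume at least $\varepsilon^n/2^n$, up to coordinates that start narrower than $\varepsilon$, which only reduces the count. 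A full binary tree with $N_{\mathrm{leaf}}$ leaves has $2N_{\mathrm{leaf}}-1$ nodes, so the total number of calls is $O(\mathrm{Vol}(X_0)/\varepsilon^n)$ once the factor is absorbed.

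The main obstacle is this constant factor. The statement's parenthetical ``each terminal box has volume $\le\varepsilon^n$'' gives an upper bound on leaf volume, but the node count needs a lower bound on it. I expect to follow the paper's convention of treating $n$-dependent constants as absorbed, or equivalently to idealize that terminal boxes have volume $\Theta(\varepsilon^n)$, so that $\#\text{leaves}\approx\mathrm{Vol}(X_0)/\varepsilon^n$. I would state this idealization explicitly, and note that the honest bound carries a factor of at most $2^n$.

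Multiplying the node count by the per-node cost $O(C_F+n)$ gives the claimed bound
\[
O\!\left((C_F+n)\,\frac{\mathrm{Vol}(X_0)}{\varepsilon^n}\right).
\]
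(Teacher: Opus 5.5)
The skeleton of your argument is the paper's. You count calls as nodes of a full binary recursion tree ($2L-1$ nodes for $L$ leaves, with exclusions only pruning), charge $O(C_F+n)$ per call, and bound $L$ by volume.

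On the volume step you are right and the paper is not. The paper deduces $L\le\lceil\mathrm{Vol}(X_0)/\varepsilon^n\rceil$ from ``each leaf has volume at most $\varepsilon^n$''. An upper bound on leaf volume only yields a \emph{lower} bound on $L$. Your repair is the correct one: under widest-coordinate bisection, a coordinate is split only while it is the widest and exceeds $\varepsilon$, so every split coordinate of a leaf has width $>\varepsilon/2$.

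The resulting $2^n$ factor is genuine, not slack. Take $X_0=[0,(1+\delta)\varepsilon]^n$ with no exclusions. Every coordinate is halved exactly once, giving $2^n$ leaves, while $\mathrm{Vol}(X_0)/\varepsilon^n=(1+\delta)^n$. Since $n$ is a variable in the bound, the stated form really does need the idealization you propose to make explicit.

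The step that fails is your claim that coordinates starting narrower than $\varepsilon$ ``only reduce the count''. Let $w_i$ be the width of the $i$-th coordinate interval of $X_0$. A coordinate with $w_i\le\varepsilon$ is never split. It therefore contributes a factor $1$ to $L$, but a factor $2w_i/\varepsilon$ to $2^n\mathrm{Vol}(X_0)/\varepsilon^n$, and that factor is $<1$ once $w_i<\varepsilon/2$.

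For example, take $X_0=[0,1]\times[0,\delta]$ with $\delta<\varepsilon/4$. It produces at least $1/\varepsilon$ leaves, while $4\,\mathrm{Vol}(X_0)/\varepsilon^2=4\delta/\varepsilon^2<1/\varepsilon$. The exact count is $L=\prod_i 2^{k_i}\le\prod_i\max\{1,2w_i/\varepsilon\}$, where $k_i$ is the number of halvings needed to bring $w_i$ to at most $\varepsilon$. This is at most $2^n\mathrm{Vol}(X_0)/\varepsilon^n$ whenever $\min_i w_i\ge\varepsilon/2$.

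So either assume $\min_i w_i\ge\varepsilon/2$, or state the honest bound as $O\bigl((C_F+n)\prod_i\max\{1,2w_i/\varepsilon\}\bigr)$. Letting $\delta\to0$ in the example drives the theorem's own bound to zero while the work stays $\Omega(C_F/\varepsilon)$, even for fixed $n=2$. Some non-degeneracy assumption of this kind is therefore needed for the statement itself, not only for your proof.
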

\begin{proof}
In the worst case, the procedure continues subdividing until every terminal (leaf) box has coordinate-wise widths at most $\varepsilon$. Such a leaf box has volume at most $\varepsilon^n$, so the number of leaf boxes is at most
\[
L \le 
\left\lceil 
\frac{\mathrm{Vol}(X_0)}{\varepsilon^n} 
\right\rceil 
= O\!\left(\frac{\mathrm{Vol}(X_0)}{\varepsilon^n}\right).
\]
\begin{figure}[t]
\centering
\begin{subfigure}[t]{0.35\linewidth}
    \centering
    \begin{overpic}[width=\linewidth]{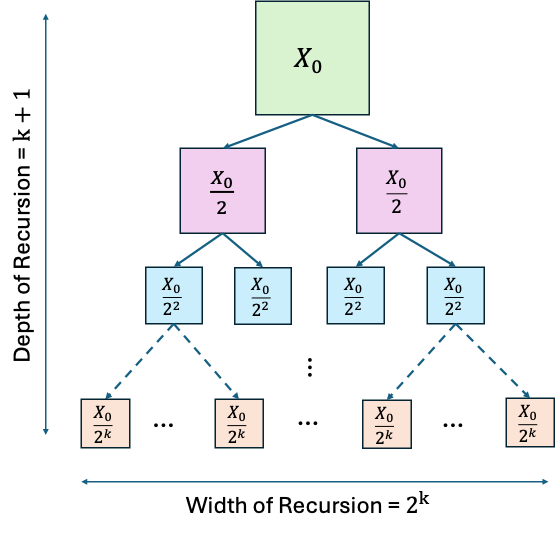}
        \put(0,97){\small\textbf{(a)}}
    \end{overpic}
    \phantomsubcaption\label{fig:recursion_tree}
\end{subfigure}\hfill
\begin{subfigure}[t]{0.48\linewidth}
    \centering
    \begin{overpic}[width=\linewidth]{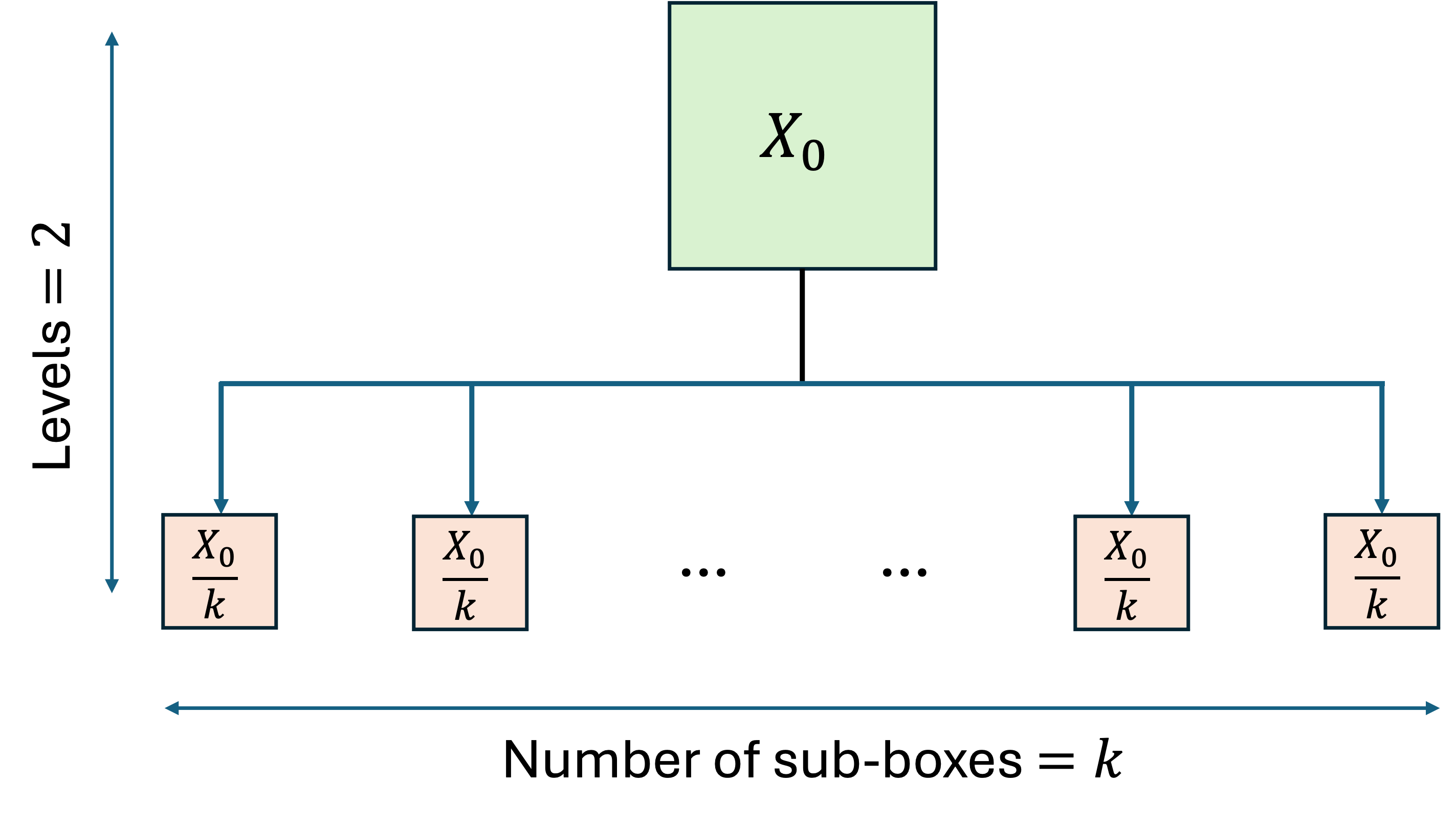}
        \put(0,70){\small\textbf{(b)}}
    \end{overpic}
    \phantomsubcaption\label{fig:subdivision_filter}
\end{subfigure}
\vspace{-0.5cm}
\caption{(a) Interval-bisection recursion tree; (b) uniform subdivision of the initial search box $X_0$ into $k$ sub-boxes used in subdivision$+$filter~\cite{prakash_design_2024}.}
\label{fig:subdivision_illustrations}
\end{figure}
A binary subdivision tree with $L$ leaves has $O(L)$ total nodes (and hence $O(L)$ recursive calls); see Fig.~\ref{fig:subdivision_illustrations}(a).

Each call performs one inclusion-function evaluation (cost $C_F$) and one diameter/width computation for termination tests (cost $O(n)$). Therefore the per-call cost is $O(C_F+n)$, and multiplying by $O(L)$ yields
\[
O\!\left((C_F+n)\,\frac{\mathrm{Vol}(X_0)}{\varepsilon^n}\right).
\]
\end{proof}
\subsubsection{Space Complexity}
\begin{theorem}\label{thm:bisection_sp}
The worst-case space complexity of interval bisection (Algorithm~\ref{algo:recursive_bisection}) is
\[
O\!\left(n\,\frac{\mathrm{Vol}(X_0)}{\varepsilon^n}\right)
\]
under uniform subdivision until the coordinate-wise widths are $\leq\varepsilon$.
\end{theorem}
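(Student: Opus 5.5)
The plan is to reuse the counting argument from the proof of Theorem~\ref{thm:bisection_tc} and charge $O(n)$ memory to every box that can be stored at any moment. An interval box $X\in\IR^n$ is stored as $2n$ endpoints, so it occupies $O(n)$ space. Evaluating $F(X,U)$ produces an interval vector $F(X,U)\in\IR^n$, which also needs $O(n)$ space. Any intermediate storage for $F$ is bounded by the expression tree; we treat it as a per-call workspace that is released after the test $0\in F(X,U)$, so it contributes only an additive $O(n)$ term. Computing the diameter (Corollary~\ref{cor:diam_cost}) needs $O(1)$ extra space.

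Memory usage then splits into two parts: the output set and the recursion stack.

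\emph{Output set.} In the worst case, every box ever stored is a leaf of the subdivision tree with coordinate-wise width at most $\varepsilon$, and hence volume at most $\varepsilon^n$. Exactly as in the proof of Theorem~\ref{thm:bisection_tc}, there are at most $L=O(\mathrm{Vol}(X_0)/\varepsilon^n)$ such leaves. The stored set therefore requires $O(nL)=O\bigl(n\,\mathrm{Vol}(X_0)/\varepsilon^n\bigr)$ space.

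\emph{Recursion stack.} Each active call holds its box $X$ and, after bisection, the pair $X_{left},X_{right}$, so each frame costs $O(n)$. The number of active frames is at most the tree depth $D$. Each bisection halves one coordinate, and a coordinate of initial width $w(X_0)$ needs about $\lceil\log_2(w(X_0)/\varepsilon)\rceil$ halvings to reach width $\varepsilon$. Hence
\[
D \le n\left\lceil \log_2\frac{w(X_0)}{\varepsilon}\right\rceil .
\]
The step I expect to be the main obstacle is showing this is dominated by the leaf term. I will use a tree argument: a binary tree with $L$ leaves has depth at most $O(L)$, so the stack uses $O(nD)\subseteq O(nL)$ space whenever $L\ge 1$.

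If one instead wants a cruder argument that avoids depth altogether, note that the total number of nodes is $O(L)$, as in Theorem~\ref{thm:bisection_tc}. Even if every node's box were held simultaneously, the memory would be $O(nL)$. This gives the bound with no finer analysis, and I would present it as the main argument with the depth estimate as a remark.

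Summing the output and stack contributions gives
\[
O(n)+O(nL)=O\!\left(n\,\frac{\mathrm{Vol}(X_0)}{\varepsilon^n}\right),
\]
as claimed.
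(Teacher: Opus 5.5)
Your argument follows the paper's structure: you take the box count $L=O(\mathrm{Vol}(X_0)/\varepsilon^n)$ from Theorem~\ref{thm:bisection_tc}, charge $O(n)$ endpoints per stored box, and treat the recursion stack as lower order. Your fallback of charging all $O(L)$ tree nodes is essentially the paper's own one-line proof. Your depth estimate $D\le n\lceil\log_2(w(X_0)/\varepsilon)\rceil$ is correct. Note that it carries a factor $n$ that the paper's later stack terms $O(\log_2(\operatorname{diam}(X_0)/\varepsilon))$ omit.

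The counting step, however, is an invalid inference, and you inherit it from the proof of Theorem~\ref{thm:bisection_tc}. The leaves partition $X_0$, so knowing that each leaf has volume \emph{at most} $\varepsilon^n$ gives only $L\,\varepsilon^n\ge\mathrm{Vol}(X_0)$, which is a \emph{lower} bound on $L$. An upper bound needs a lower bound on leaf volume.

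Suppose you bisect a widest coordinate. A coordinate of initial width $w_i>\varepsilon$ is then halved exactly $\lceil\log_2(w_i/\varepsilon)\rceil$ times and ends with width in $(\varepsilon/2,\varepsilon]$. This gives $L\le\prod_i\max(1,2w_i/\varepsilon)$, which equals $2^n\,\mathrm{Vol}(X_0)/\varepsilon^n$ when all $w_i\ge\varepsilon$.

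The $2^n$ factor is real. Take $X_0=[0,\varepsilon(1+1/n)]^n$ with $0\in F(X,U)$ on every box. The algorithm saves $2^n$ boxes and uses $\Theta(n\,2^n)$ memory, while $n\,\mathrm{Vol}(X_0)/\varepsilon^n<e\,n$. Separately, a coordinate with $w_i<\varepsilon$ (for example $w_i=0$) defeats any volume-based count.

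So, as written, your proof establishes the theorem only under an idealization you should state explicitly. One option is to assume each $w_i(X_0)/\varepsilon$ is a power of two; then every leaf has volume exactly $\varepsilon^n$ and $L=\mathrm{Vol}(X_0)/\varepsilon^n$. Without such an assumption, the honest bound is $O\bigl(n\prod_i\max(1,2w_i/\varepsilon)\bigr)$, and your stack term $nD$ is still dominated by it.

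A smaller point concerns the transient workspace for evaluating $F$. It is $O(k)$ for an expression tree with $k$ operations, not $O(n)$. It is additive, but it drops out of the stated bound only under the paper's convention of counting box storage alone.
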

\begin{proof}
The dominant memory cost is storing boxes either in the recursion stack/worklist or in the output set. Under uniform subdivision to resolution $\varepsilon$, the total number of boxes can be as large as $O\!\left(\mathrm{Vol}(X_0)/\varepsilon^n\right)$, and storing each box requires $O(n)$ endpoints. Up to constant factors (and lower-order stack-depth terms), this yields the stated bound.
\end{proof}
In practical implementations, the memory footprint is often smaller than the worst-case bound above, since many boxes are discarded early once an exclusion test succeeds.
More generally, the exponential dependence on $n$ is an inherent limitation of uniform subdivision schemes, and motivates aggressive pruning and adaptive subdivision strategies in higher dimensions.

A drawback of a purely recursive implementation is that it incurs recursion/stack overhead and may perform redundant evaluations on boxes that could instead be handled in a single pass over a worklist. A common alternative is therefore to first subdivide $X_0$ into a finite collection of boxes and then apply a cheap filtering step.
This leads to the following subdivision-and-filter approach.

\subsection{Subdivision \texorpdfstring{$+$}{+} Filter Method}
This method avoids recursion by processing a uniformly subdivided grid of boxes using an explicit worklist. This primarily reduces implementation overhead and makes memory usage more predictable; the dominant cost still scales with the number of sub-boxes created. The initial search space is partitioned into smaller subregions, and subregions that meet a non-existence criterion are filtered out~\cite{prakash_design_2024}. The remaining subregions are those that potentially contain a solution.
Figure~\ref{fig:subdivision_illustrations}(b) illustrates the uniform subdivision step applied to $X_0$; Algorithm~\ref{algo:subdivision_filter} summarizes the method~\cite{prakash_design_2024}.

\begin{algorithm}[h!]
    \caption{Subdivision $+$ Filter}
    \label{algo:subdivision_filter}

    \DontPrintSemicolon

    \KwIn{$F$, $X_0$, $U$, $m$.}
    \KwOut{Set}

    \SetKwFunction{SubdivisionFilter}{SubdivisionFilter}
    \SetKwProg{Fn}{Function}{}{}

    \Fn{\SubdivisionFilter$(X)$}
    {
        \If{$0 \in F(X, U)$}
        {
            \text{$X_m$ $\leftarrow$ Divide $X$ into $m$ parts per dimension.}\;
            \text{$Set$ $\leftarrow$ Select all $X_f$ in $X_m$ satisfying $0 \in F(X_f, U)$.}\;
        }
        \KwRet{$Set$}\;
    }

    ${\SubdivisionFilter}(X_0)$
\end{algorithm}

\subsubsection{Time Complexity}
\begin{theorem}\label{thm:subfilter_tc}
Let $m$ denote the number of subdivisions per dimension (so the grid contains $m^n$ sub-boxes). The worst-case computational time complexity of the Subdivision$+$Filter method (Algorithm~\ref{algo:subdivision_filter}) is
\(
O\!\left(m^n\,C_F\right).
\)
\end{theorem}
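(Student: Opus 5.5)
The proof will count operations line by line in Algorithm~\ref{algo:subdivision_filter}, mirroring the proof of Theorem~\ref{thm:bisection_tc}. It rests on two facts. Each inclusion-function evaluation costs $C_F$ by Theorem~\ref{thm:CF_cost}. The grid generated by the subdivision step has exactly $m^n$ sub-boxes, since each of the $n$ coordinate intervals is split into $m$ parts and the grid is their Cartesian product. Unlike the bisection case, no volume/$\varepsilon$ argument is needed, because the number of boxes is fixed in advance by $m$.

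\textbf{Step 1: the initial test.} The test $0\in F(X_0,U)$ on the root box costs one evaluation, $C_F$, plus $O(n)$ comparisons of zero against the $n$ component intervals. If the test fails, the algorithm returns immediately and this is the total cost, so the worst case is the branch where it succeeds.

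\textbf{Step 2: generating the grid.} Each coordinate interval $X_i$ is split into $m$ subintervals at cost $O(m)$, giving $O(nm)$ in total. Writing out the $m^n$ sub-boxes, each with $n$ interval endpoints, costs $O(n\,m^n)$.

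\textbf{Step 3: the filter loop.} Each of the $m^n$ sub-boxes $X_f$ requires one evaluation $F(X_f,U)$, costing $C_F$, plus an $O(n)$ membership check of $0$ in each component. Saving an accepted box costs $O(n)$. The loop therefore costs $O\bigl(m^n(C_F+n)\bigr)$.

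\textbf{Step 4: combining and absorbing the $n$ terms.} Summing the three steps gives $O\bigl(m^n(C_F+n)\bigr)$. To reduce this to $O(m^n C_F)$, I will argue that $C_F=\Omega(n)$: by Theorem~\ref{thm:CF_cost}, $C_F=O(ck)$, and an expression tree defining a map $\mathbb{R}^n\to\mathbb{R}^n$ that genuinely depends on its $n$ inputs and produces $n$ outputs must contain $k\ge n$ nodes (or at least must read its $n$ inputs and write its $n$ outputs). Hence the $O(n)$ per-box overhead is dominated by $C_F$, and the bound becomes $O(m^n C_F)$.

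This absorption is the main obstacle. If one does not accept $C_F=\Omega(n)$, the statement should be read in the same convention as Theorem~\ref{thm:bisection_tc}, which writes $(C_F+n)$ explicitly. In that case I would state the bound as $O\bigl(m^n(C_F+n)\bigr)$ and note that it collapses to the claimed form under the mild assumption that the expression tree has at least $n$ operations.
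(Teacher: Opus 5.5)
Your proposal is correct and follows the same approach as the paper. The paper's proof multiplies the $m^n$ grid boxes by the per-box cost $C_F$ and silently drops the $O(n)$ bookkeeping (grid generation, membership checks, saving boxes) and the one extra evaluation on $X_0$, much as Theorem~\ref{thm:icp_tc} treats diameter checks as lower-order, so your absorption step is a more careful version of the same count. Rest that step on the output-size argument (writing the $n$ output intervals forces $C_F=\Omega(n)$), because your node-count claim $k\ge n$ fails for, e.g., the identity map, which uses no operations.
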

\begin{proof}
The algorithm evaluates the inclusion function once per sub-box in the $m^n$-box grid and retains those boxes $X$ for which $0\in F(X,U)$. Each inclusion-function evaluation costs $C_F$, hence the overall worst-case time is $O(m^n C_F)$.
\end{proof}
\noindent\textit{Remark.} Under uniform subdivision to a target coordinate-wise width $\varepsilon$, one may take $m\approx w(X_0)/\varepsilon$, which yields the familiar scaling $O\!\left(C_F\,(w(X_0)/\varepsilon)^n\right)$ (up to constants depending on the splitting convention).
\subsubsection{Space Complexity}
\begin{theorem}\label{thm:sub_filter_sp}
The worst-case computational space complexity of the Subdivision $+$ Filter method (Algorithm~\ref{algo:subdivision_filter}) is $O\!\left(m^n\right)$.
\end{theorem}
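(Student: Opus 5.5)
The plan is to count the memory held at any moment by Algorithm~\ref{algo:subdivision_filter}, mirroring the proof of Theorem~\ref{thm:bisection_sp}, and to treat the box dimension $n$ as the constant-per-box storage factor implicit in the statement. There is no recursion beyond the single top-level call, so there is no stack growth to bound. The only data structures are the grid $X_m$, the output $Set$, and the workspace for one inclusion-function evaluation. I will bound each of these separately and then sum them.

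First, the grid $X_m$. It consists of exactly $m^n$ sub-boxes, since each of the $n$ coordinate intervals is cut into $m$ pieces. Materializing it explicitly costs $O(m^n)$ box records. Each record holds $n$ interval endpoint pairs, so strictly this is $O(n\,m^n)$ scalars. I will state this explicitly and note that, with box records as the unit of storage, it is $O(m^n)$.

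Alternatively, the grid need not be stored at all. A sub-box can be generated on the fly from a multi-index $(i_1,\dots,i_n)\in\{1,\dots,m\}^n$ in $O(n)$ space. I will mention this only to show that the dominant term comes from the output set rather than from the grid.

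Second, the output set. In the worst case the filter removes nothing, so $0\in F(X_f,U)$ holds for every sub-box and $|Set| = m^n$; hence $Set$ requires $O(m^n)$ box records. Third, the evaluation workspace. A single evaluation of $F$ needs intermediate storage for the expression tree, which is bounded by $O(k)$ by the same accounting as in Theorem~\ref{thm:CF_cost}. This storage is reused across sub-boxes, so it contributes only an additive term that does not depend on $m$.

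Summing the three contributions gives $O(m^n)$ boxes plus lower-order terms, which is the claimed bound. The main obstacle is a bookkeeping one: making consistent the unit of space (boxes versus scalars) that the statement silently adopts, since with scalars the bound becomes $O(n\,m^n)$. I would resolve this with a remark. Consistent with the remark after Theorem~\ref{thm:subfilter_tc}, I would also note that taking $m\approx w(X_0)/\varepsilon$ yields $O\bigl((w(X_0)/\varepsilon)^n\bigr)$.
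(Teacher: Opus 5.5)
Your proposal is correct and follows the paper's proof, which argues only that in the worst case none of the $m^n$ grid sub-boxes is filtered out, so $O(m^n)$ boxes must be stored in the output set or a worklist. Your extra bookkeeping (on-the-fly grid generation, the reusable $O(k)$ evaluation workspace, and the remark that counting scalars rather than boxes gives $O(n\,m^n)$) refines that same count and does not change the route.
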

\begin{proof}
In the worst case, none of the $m^n$ sub-boxes are filtered out, and hence the method may need to store $O(m^n)$ boxes (either in the output set or in a worklist).
\end{proof}
Interval bisection and subdivision$+$filter are straightforward to implement for enclosing multiple roots of $f(x,u)=0$, but they can be expensive in high dimensions because they rely primarily on subdivision to reach the target tolerance. Moreover, when multiple steady states lie close together, a very small $\varepsilon$ may be required to separate them into distinct boxes, further increasing the number of boxes that must be processed.

\subsection{Interval Constraint Propagation}
Interval constraint propagation is a standard set-inversion approach for nonlinear constraint satisfaction with continuous variables~\cite{jaulin_applied_2001}. Given an initial box $X_0$, interval constraint propagation iteratively contracts candidate boxes using a contractor associated with the constraints and subdivides as needed to separate solutions. In the interval-analysis setting, this corresponds to the SIVIA framework~\cite{jaulin_applied_2001}. Algorithm~\ref{algo:contractor} summarizes the variant analyzed in this paper~\cite{prakash_design_2024, prakash_rigorous_2025}.

\begin{algorithm}[H]
    \caption{Interval Constraint Propagation}
    \label{algo:contractor}
    \DontPrintSemicolon
    \KwIn{Contractor, $X_0$, $\varepsilon$, $m$, $N_{\mathrm{it}}$}
    \KwOut{Set}
    \SetKwFunction{IntervalConstraintPropagation}{IntervalConstraintPropagation}
    \SetKwFunction{Contractor}{Contractor}
    \SetKwProg{Fn}{Function}{}{}

    \Fn{\IntervalConstraintPropagation{$X$}}{
        $X_m \leftarrow$ Divide $X$ into $m$ parts in each dimension\;
        \For{$X_i$ $\in$ $X_m$}
        {
            $X_{old} \leftarrow X_i$ \;
            \For{$j = 1,\ldots, N_{\mathrm{it}}$}
            {
                $X_{new} \leftarrow$ \Contractor($X_{old}$) \;
                \If{$\operatorname{diam}(X_{old}) - \operatorname{diam}(X_{new}) \leq \varepsilon / 10$ OR $X_{new} = \emptyset$}
                {
                    \textbf{break}
                }
                $X_{old} \leftarrow X_{new}$\;
            }
            \If{$X_{new}\neq \emptyset$}
            {
                Save $X_{new}$ in Set
            }
        }
        \KwRet{Set}\;
    }

${\IntervalConstraintPropagation}(X_0)$
\end{algorithm}

Additional notation: $m$ denotes the number of subdivisions per dimension used to form $X_m$ (so the total number of elements in $X_m$ is $m^n$), and $N_{\mathrm{it}}$ denotes the maximum number of contractor iterations performed per sub-box.
\subsubsection{Time Complexity}
\begin{theorem}\label{thm:icp_tc}
Let $C_{\mathrm{con}}$ denote the cost of one call to the contractor on an $n$-dimensional box. The worst-case computational time complexity of the interval constraint propagation method (Algorithm~\ref{algo:contractor}) for enclosing all zeros of \(f(x,u)=0\) with \(x\in X_0\) and \(u\in U\) is
\(
O\!\left(m^n\,N_{\mathrm{it}}\,\bigl(C_{\mathrm{con}}+n\bigr)\right).
\)
If one contractor call has the same asymptotic cost as evaluating the inclusion function once, i.e., $C_{\mathrm{con}}=O(C_F)$, then
\(
O\!\left(m^n\,N_{\mathrm{it}}\,(C_{\mathrm{con}}+n)\right)=O\!\left(m^n\,N_{\mathrm{it}}\,C_F\right)
\)
(up to lower-order diameter-check terms).
In particular, under uniform subdivision such that $m^n\approx \mathrm{Vol}(X_0)/\varepsilon^n$, this yields
\[
O\!\left(N_{\mathrm{it}}\,C_F\,\frac{\mathrm{Vol}(X_0)}{\varepsilon^n}\right).
\]
\end{theorem}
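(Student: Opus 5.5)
The plan is to argue directly from the loop structure of Algorithm~\ref{algo:contractor}, in the same way as Theorems~\ref{thm:bisection_tc} and~\ref{thm:subfilter_tc}: bound the number of sub-boxes, then the work per sub-box, and multiply. The bound is charged against the iteration cap $N_{\mathrm{it}}$ rather than any convergence property of the contractor.

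\emph{Number of sub-boxes.} The subdivision step produces exactly $m^n$ sub-boxes $X_i$. Forming them costs $O(n)$ endpoint computations per box, hence $O(n\,m^n)$ in total. This is absorbed into the later terms, since each sub-box incurs at least one $O(n)$ diameter check.

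\emph{Work per sub-box.} For a fixed $X_i$, the inner loop runs at most $N_{\mathrm{it}}$ times; an early \textbf{break} only lowers the count. Each iteration consists of the following steps:
\begin{itemize}
\item one contractor call, costing $C_{\mathrm{con}}$;
\item two diameter evaluations, $\operatorname{diam}(X_{old})$ and $\operatorname{diam}(X_{new})$, each $O(n)$ by Corollary~\ref{cor:diam_cost};
\item a subtraction and comparison, $O(1)$;
\item an emptiness test on $X_{new}$, $O(n)$ by checking each coordinate's endpoints;
\item the assignment $X_{old}\leftarrow X_{new}$, $O(n)$.
\end{itemize}
So one iteration costs $O(C_{\mathrm{con}}+n)$. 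The final emptiness test and the optional save into Set add $O(n)$ per sub-box. The worst-case cost per sub-box is therefore $O\bigl(N_{\mathrm{it}}(C_{\mathrm{con}}+n)\bigr)$, and summing over the $m^n$ sub-boxes gives $O\bigl(m^n N_{\mathrm{it}}(C_{\mathrm{con}}+n)\bigr)$.

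\emph{Specializations.} If $C_{\mathrm{con}}=O(C_F)$, the bound becomes $O\bigl(m^n N_{\mathrm{it}}(C_F+n)\bigr)$. The step needing the most care is dropping the $+n$ term. I would justify it by noting that any nontrivial inclusion function for $f:\mathbb{R}^n\to\mathbb{R}^n$ must read all $n$ input coordinates and produce $n$ components. Its expression tree therefore has $k\ge n$ operations, so by Theorem~\ref{thm:CF_cost}, $C_F=\Omega(n)$ and $C_F+n=O(C_F)$. Failing that, I would state the simplification, as the theorem does, "up to lower-order diameter-check terms."

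Finally, under uniform subdivision to width $\varepsilon$, substitute $m^n\approx \mathrm{Vol}(X_0)/\varepsilon^n$, exactly as in the leaf-count argument of Theorem~\ref{thm:bisection_tc}. This yields $O\bigl(N_{\mathrm{it}}C_F\,\mathrm{Vol}(X_0)/\varepsilon^n\bigr)$.
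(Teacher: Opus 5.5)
Your proposal is correct and follows essentially the same argument as the paper: count the $m^n$ sub-boxes, bound each by at most $N_{\mathrm{it}}$ iterations costing $O(C_{\mathrm{con}}+n)$ (one contractor call plus the two $O(n)$ diameter evaluations), multiply, and substitute $m^n\approx \mathrm{Vol}(X_0)/\varepsilon^n$. Your extra bookkeeping (grid construction, emptiness tests, assignments) is harmless, and your $k\ge n$ argument for absorbing the $+n$ term is not airtight, since a component equal to a bare coordinate uses no operations; but the paper does not justify that step at all, so falling back on the theorem's own ``up to lower-order diameter-check terms'' qualification is exactly what is needed.
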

\begin{proof}
The algorithm partitions $X$ into $m^n$ sub-boxes. For each sub-box $X_i$, it performs at most $N_{\mathrm{it}}$ contractor iterations. Each iteration calls the contractor once (cost $C_{\mathrm{con}}$) and evaluates $\operatorname{diam}(\cdot)$ on $X_{\mathrm{new}}$ and $X_{\mathrm{old}}$ to test the stopping criterion, which costs $O(n)$ per iteration. Hence the worst-case cost per sub-box is $O\bigl(N_{\mathrm{it}}(C_{\mathrm{con}}+n)\bigr)$, and multiplying by $m^n$ sub-boxes gives the first bound. The second bound follows by substituting $m^n\approx \mathrm{Vol}(X_0)/\varepsilon^n$ under uniform subdivision to tolerance.
\end{proof}
\subsubsection{Space Complexity}
\begin{theorem}\label{thm:icp_sp}
The worst-case computational space complexity of the interval constraint propagation method (Algorithm~\ref{algo:contractor}) for enclosing all zeros of \(f(x,u)=0\) with \(x\in X_0\) and \(u\in U\) is
\(
O\!\left(m^n\right).
\)
In particular, under uniform subdivision such that $m^n\approx \mathrm{Vol}(X_0)/\varepsilon^n$, this yields $O\!\left(\mathrm{Vol}(X_0)/\varepsilon^n\right)$.
\end{theorem}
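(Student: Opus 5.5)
The plan is to count the memory that Algorithm~\ref{algo:contractor} holds simultaneously, in the same way as in the proofs of Theorems~\ref{thm:bisection_sp} and~\ref{thm:sub_filter_sp}. The candidates are the partition $X_m$, the working boxes $X_{old}$ and $X_{new}$ inside the inner loop, any temporary storage used by one contractor call, and the output set. Each of these is a collection of $n$-dimensional boxes, so each box needs $2n$ endpoints.

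First, the partition $X_m$ has exactly $m^n$ elements. Either it is materialized, or it is generated on the fly by an index over $\{1,\dots,m\}^n$; in both cases it takes at most $O(m^n)$ boxes. Second, the inner loop over $j=1,\dots,N_{\mathrm{it}}$ overwrites $X_{old}$ with $X_{new}$, so at most two working boxes are alive at any time, independent of $N_{\mathrm{it}}$. The iteration count therefore affects time (Theorem~\ref{thm:icp_tc}) but not space. I will assume that a single contractor call uses workspace that is at most of the same order as one box, and is released after the call. This is the analogue of assuming that an evaluation of $F$ needs only transient storage. Third, each sub-box $X_i$ contributes at most one saved box $X_{new}$ to the output set. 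In the worst case none of the sub-boxes contracts to $\emptyset$, so the set holds at most $m^n$ boxes.

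Adding these gives $O(m^n)+O(1)+O(m^n)=O(m^n)$ boxes. Following the convention of Theorem~\ref{thm:sub_filter_sp}, this is counted in boxes, with the $O(n)$ endpoints per box absorbed or stated separately as $O(n\,m^n)$ scalars. The "in particular" clause then follows by substituting $m^n\approx \mathrm{Vol}(X_0)/\varepsilon^n$ under uniform subdivision to tolerance, which gives $O(\mathrm{Vol}(X_0)/\varepsilon^n)$.

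The only delicate step is justifying that the contractor's internal workspace and the $N_{\mathrm{it}}$ iterations do not add a factor. I will handle this by stating the transient-workspace assumption explicitly and noting that the iterates overwrite each other in place. If a contractor needed, for example, an $n\times n$ interval matrix, this would add only a lower-order $O(n^2)$ term, which is dominated by $m^n$ whenever $m\ge 2$ and $n$ is moderate.
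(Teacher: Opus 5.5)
Your argument is correct and matches the paper's proof: both note that the inner contraction loop keeps only a constant number of boxes ($X_{old}$, $X_{new}$) alive regardless of $N_{\mathrm{it}}$, and that the dominant cost is the output set, which in the worst case receives one nonempty contracted box from each of the $m^n$ sub-boxes; the volume form then follows by substituting $m^n\approx \mathrm{Vol}(X_0)/\varepsilon^n$. Your extra bookkeeping (a possibly materialized partition $X_m$, the stated transient-workspace assumption on the contractor, and the $O(n\,m^n)$ scalar count) makes explicit what the paper leaves implicit and does not change the bound.
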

\begin{proof}
The algorithm processes sub-boxes sequentially and only needs to store a constant number of boxes during contraction (e.g., $X_{\mathrm{old}}$ and $X_{\mathrm{new}}$). The dominant memory cost is storing the output set of nonempty contracted boxes. In the worst case, every one of the $m^n$ sub-boxes yields a nonempty contracted box, and hence the space complexity is $O(m^n)$.
\end{proof}
In practice, the observed cost is often substantially smaller than these worst-case bounds because contractor pruning and early termination in the $N_{\mathrm{it}}$ loop reduce the number of boxes retained and the number of iterations performed per box.

The next two interval methods use derivative information to accelerate contraction and verification. This typically improves practical performance but increases per-box cost because it requires interval Jacobian evaluation and (directly or indirectly) matrix inversion.

\subsection{Interval Linear Algebra: Determinant, Adjoint, and Inverse}
\begin{theorem}
Consider a continuously differentiable mapping $f : \mathbb{R}^n \to \mathbb{R}^n$ and an $n$-dimensional interval vector (box) $X \in \IR^n$. Suppose that (i) each scalar interval operation has maximum computational cost \(O(c)\), and (ii) each partial derivative $\frac{\partial f_i}{\partial x_j}$ can be evaluated on $X$ using at most $k$ scalar operations. Under these assumptions, the worst-case time complexity for computing the interval Jacobian
\(
    J(X) \in \IR^{n \times n}
\)
over $X$ is given by
\(
    T(n) = O(ckn^2).
\)
\end{theorem}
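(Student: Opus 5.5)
The plan is to reduce the statement to Theorem~\ref{thm:CF_cost}, applied once to each entry of the Jacobian, and then add up the entries. I will view $J(X)$ as an $n\times n$ array whose $(i,j)$ entry is an interval extension $J_{ij}(X)$ of $\partial f_i/\partial x_j$. This extension is obtained by taking an expression for the partial derivative, for instance one produced by symbolic or automatic differentiation, and replacing each of its elementary operations by the interval counterpart. Since $f$ is continuously differentiable, every partial derivative exists. By assumption (ii), each one has an expression tree with at most $k$ scalar operations, so its natural interval extension is well defined in exactly the sense of Theorem~\ref{thm:CF_cost}.

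\textbf{Step 1.} For a fixed pair $(i,j)$, I apply Theorem~\ref{thm:CF_cost} with $k$ operations and per-operation cost $O(c)$, as given by assumption (i). This shows that evaluating $J_{ij}(X)$ on the box $X$ costs $O(ck)$. The argument is the same as before: interval evaluation follows the scalar expression tree operation by operation.

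\textbf{Step 2.} I sum over all $n^2$ index pairs. The entries are evaluated independently, so the total cost satisfies
\[
T(n)\le \sum_{i=1}^{n}\sum_{j=1}^{n} O(ck) = O(ckn^2).
\]
Assembling the result into an $n\times n$ interval matrix means writing $2n^2$ endpoints. That costs $O(n^2)$, which is absorbed into $O(ckn^2)$ because $c,k\ge 1$.

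\textbf{Main subtlety.} The main point to handle is not computational. It is making precise what ``evaluated on $X$'' means, so that the bound $k$ applies per entry. Assumption (ii) must be read as a per-entry operation count, which rules out shared subexpressions being counted only once. This is a worst-case upper bound, so any sharing of common subexpressions across entries can only lower the cost. I will also note that the result is an enclosure of $\{f'(x): x\in X\}$, by range enclosure from the first theorem of Section~2. This shows that the computed object is a valid interval Jacobian, although correctness is not part of the complexity claim.
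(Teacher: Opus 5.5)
Your proposal is correct and follows the paper's proof essentially verbatim. Each of the $n^2$ entries $J_{ij}(X)$ is an interval evaluation of $\partial f_i/\partial x_j$ costing $O(ck)$ by assumptions (i)--(ii), and summing gives $O(ckn^2)$; invoking Theorem~\ref{thm:CF_cost} per entry and adding the absorbed $O(n^2)$ assembly term are harmless refinements that the paper leaves implicit.
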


\begin{proof}
The interval Jacobian $J(X)$ has $n^2$ entries $J_{ij}(X)$, each obtained by evaluating the partial derivative $\frac{\partial f_i}{\partial x_j}$ on the interval box $X$ using interval arithmetic. By assumption (ii), evaluating one partial derivative uses at most $k$ scalar operations, and by assumption (i), each operation costs $O(c)$, so one entry $J_{ij}(X)$ costs $O(ck)$. 

With $n^2$ independent entries, the total cost of computing the interval Jacobian is
\begin{equation}\label{eqn:c_j}
    T(n) = n^2 \cdot O(ck) = O(ckn^2) = C_J.
\end{equation}
\end{proof}

Computing tight bounds for the determinant and the inverse of an interval matrix is NP-hard~\cite{rohn_handbook_2005, horacek_determinants_2018, horacek_interval_2016}. In practice one therefore computes \,\emph{enclosures} of $\det(A)$ and of the set of admissible inverses rather than exact ranges. For completeness---and to expose the worst-case cost drivers---we first analyze the classical Laplace/cofactor (adjugate) formulas for interval matrices and then contrast them with polynomial-time alternatives used in interval linear algebra.
\begin{theorem}\label{theorem:cc_determinant}
The computational time complexity of evaluating the determinant, \(\det(A)\), of an interval matrix \(A \in \IR^{n \times n}\) via Laplace expansion is \(O(10^{n} n!)\).
\end{theorem}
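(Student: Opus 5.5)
We set up a recurrence for the cost $T(n)$ of a Laplace (cofactor) expansion of an $n\times n$ interval matrix along the first row, using the interval operation costs from Table~\ref{table:int_arith_cost}, and then unroll it. The expansion
\[
\det(A)=\sum_{j=1}^{n}(-1)^{1+j}A_{1j}\,\det(M_{1j})
\]
has $n$ terms. Each term needs one $(n-1)\times(n-1)$ minor determinant, one interval multiplication $A_{1j}\cdot\det(M_{1j})$ costing $O(10t)$, and a sign flip that is free for intervals (it just swaps and negates endpoints). Summing the $n$ terms takes $n-1$ interval additions or subtractions at $O(2t)$ each. Forming the minors only copies entries, which we either treat as free or charge at $O(n^2)$ per minor; in both cases this is lower order. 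The recurrence is
\[
T(n)=n\,T(n-1)+10t\,n+2t(n-1),\qquad T(1)=O(1),
\]
since the determinant of a $1\times1$ interval matrix is just its entry. So $T(n)\le n\,T(n-1)+12tn$.

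Next we unroll the recurrence. Dividing by $n!$ gives
\[
\frac{T(n)}{n!}\le\frac{T(n-1)}{(n-1)!}+\frac{12t}{(n-1)!}.
\]
Summing the telescoping series yields
\[
\frac{T(n)}{n!}\le T(1)+12t\sum_{k\ge0}\frac{1}{k!}=T(1)+12te,
\]
so $T(n)=O(t\,n!)$. This already lies inside the stated $O(10^n n!)$ bound.

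The factor $10^n$ in the theorem presumably comes from a cruder per-level accounting. If we bound each expansion level's multiplicative overhead by the interval-multiplication constant $10$ instead of letting the constant terms telescope, we get a recurrence of the form $T(n)\le 10\,n\,T(n-1)$, which gives $T(n)=O(10^n n!)$ directly by induction: $T(n)\le 10^n n!\,T(1)$. I will present this inductive route as the main argument, because it matches the statement. I will also remark that the telescoped bound $O(n!)$ is sharper, so the theorem holds either way; an upper bound in Big-$O$ only needs $T(n)\le C\,10^n n!$.

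The main obstacle is justifying the cost model rather than the algebra. Concretely, I need to check that minor construction and bookkeeping are absorbed: $n\cdot O(n^2)$ copying per level still telescopes to $O(n!)$ because $\sum_k k^2/k!$ converges. I also need to make sure the recursion tree really has $n!/k!$ nodes at depth $n-k$, so that no hidden combinatorial blow-up appears beyond what the recurrence captures. I will do this check by induction on $n$, which gives the stated bound.
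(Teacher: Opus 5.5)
Your proposal is correct and follows the paper's proof: you use the same row-wise Laplace recurrence $T(n)\le n\,T(n-1)+O(n)$, built from the interval-operation constants of Table~\ref{table:int_arith_cost}. Your telescoping step (dividing by $n!$ to get $T(n)=O(t\,n!)$) is more careful than the paper's one-line ``unrolling'' and shows the $10^n$ factor is slack, so lead with it; the alternative route $T(n)\le 10\,n\,T(n-1)$ is only justified after absorbing $12tn$ into $9n\,T(n-1)$, which needs a base constant of at least $4t/3$ (e.g.\ induct on $S(1)=\max\{T(1),4t/3\}$, $S(n)=10n\,S(n-1)$).
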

\begin{proof}
Laplace expansion along any fixed row expresses $\det(A)$ as a sum of $n$ terms
\[
\det(A)=\sum_{j=1}^{n}(-1)^{i+j} a_{ij}\,\det(M_{ij}),
\]
where each $M_{ij}$ is an $(n-1)\times(n-1)$ minor. Thus, computing $\det(A)$ via Laplace expansion requires computing $n$ determinants of size $(n-1)\times(n-1)$ and combining them with $n$ multiplications and $n-1$ additions.

Using the constant-factor bounds for interval arithmetic in Table~\ref{table:int_arith_cost}, each interval multiplication contributes only a fixed constant factor (here bounded by $10$) relative to a scalar real multiplication, and each interval addition contributes a fixed constant factor (bounded by $2$). Therefore the cost satisfies a recurrence of the form
\(
T(n) \le n\,T(n-1) + O(10n),
\)
where the $O(10n)$ term accounts for the $n$ interval multiplications and $n-1$ interval additions used to combine the $n$ minors.

Unrolling the recurrence yields
\(
T(n)=O\!\left(10^{n}\,n!\right),
\)
which we denote by
\begin{equation}\label{eqn:comp_cost_det}
    T(n)=O\!\left(10^{n} n!\right)=C_{det}.
\end{equation}
\end{proof}
\begin{theorem}\label{theorem:cc_adjoint}
The computational time complexity of determining the adjoint matrix, denoted by $\operatorname{adj}(A)$, for an interval matrix $A = \left\{\left[\underline{a_{ij}}, \overline{a_{ij}}\right]\right\} \in \IR^{n \times n}$ using the Laplace (cofactor) expansion method is $O(10^{\,n-1} \, n \cdot n!)$.
\end{theorem}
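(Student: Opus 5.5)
The adjoint $\operatorname{adj}(A)$ is the transpose of the cofactor matrix. Its $(j,i)$ entry is $(-1)^{i+j}\det(M_{ij})$, where $M_{ij}$ is the $(n-1)\times(n-1)$ interval minor obtained by deleting row $i$ and column $j$ of $A$. The plan is to count the cost of forming all $n^2$ cofactors directly from this definition, computing each minor determinant independently by Laplace expansion, and then to add the lower-order bookkeeping costs.

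The first step is to invoke Theorem~\ref{theorem:cc_determinant} with $n-1$ in place of $n$. This bounds the cost of a single minor determinant by $C_{det}(n-1)=O\!\left(10^{\,n-1}(n-1)!\right)$. The sign factor $(-1)^{i+j}$ acts on an interval only by swapping its endpoints, which costs $O(1)$. Extracting the minor (index bookkeeping or copying $(n-1)^2$ intervals) costs $O(n^2)$, which is dominated by the factorial term for $n\ge 2$.

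The second step is to multiply by the number of cofactors. There are $n^2$ of them, so the total cost is
\[
n^2\cdot O\!\left(10^{\,n-1}(n-1)!\right)=O\!\left(10^{\,n-1}\,n\cdot n(n-1)!\right)=O\!\left(10^{\,n-1}\,n\cdot n!\right),
\]
using $n(n-1)!=n!$. The transposition that places cofactor $(i,j)$ at position $(j,i)$ only permutes indices and costs $O(n^2)$, which is absorbed into the bound. Together these give the stated bound.

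The main care point is to keep the constant base of the exponential consistent with the recurrence in Theorem~\ref{theorem:cc_determinant}. That recurrence, $T(n)\le nT(n-1)+O(10n)$, must unroll to $O(10^{n}n!)$ in the same form when evaluated at size $n-1$, so that the minor cost really is $10^{n-1}(n-1)!$ up to constants and not something that would shift the exponent. I would also note that reusing sub-minors shared across cofactors could only lower this cost. Since the statement is an upper bound for the naive cofactor method, independent computation of each cofactor is the appropriate accounting.
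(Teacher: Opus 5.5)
Your proposal is correct and follows the paper's proof: apply the determinant bound of Theorem~\ref{theorem:cc_determinant} at size $n-1$ to each of the $n^2$ cofactors, then use $n^2 (n-1)! = n\cdot n!$ to obtain $O(10^{\,n-1}\, n\cdot n!)$. The sign, minor-extraction and transposition costs you add are lower-order terms that the paper leaves implicit.
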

\begin{proof}
To determine the adjoint of a square matrix $A = \left\{ \left[\underline{a_{ij}}, \overline{a_{ij}}\right] \right\}$, it is necessary to compute $n^2$ cofactors. Consider the computation of a cofactor $CF_{ij}$, which involves obtaining the determinant of an $(n-1) \times (n-1)$ submatrix $A_{ij}$, while also applying the appropriate sign. 

From Eq.~\ref{eqn:comp_cost_det}, the complexity of computing $\det(A)$ is $O(10^n n!)$.
For each of the $n^2$ cofactors, calculating a determinant of size $(n-1) \times (n-1)$ incurs a cost of $O(10^{n-1} (n-1)!)$.
Consequently, the computational cost to calculate the adjoint is 
\begin{equation}\label{eqn:comp_cost_adj}
    T(n) = O(n^2 10^{n-1} (n-1)!) = O(10^{n-1} n \cdot n!) = C_{adj}.
\end{equation}
\end{proof}
\begin{theorem}\label{theorem:cc_inverse}
The computational time complexity \(C_{A^{-1}}\) associated with computing the inverse \(A^{-1}\) of an interval matrix \(A \in \IR^{n \times n}\) via the adjugate method is given by
\(
C_{A^{-1}} = O\!\bigl(10^{n-1}(10 + n)\, n! + 12 n^2\bigr).
\)
\end{theorem}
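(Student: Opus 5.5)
We prove the statement by writing the adjugate formula $A^{-1} = \operatorname{adj}(A)/\det(A)$ and adding up three separately bounded costs: the determinant, the adjoint, and the entrywise division. Each piece comes from an earlier result or from Table~\ref{table:int_arith_cost}.

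\textbf{Determinant.} By Theorem~\ref{theorem:cc_determinant} and Eq.~\eqref{eqn:comp_cost_det}, computing $\det(A)$ by Laplace expansion costs $C_{det}=O(10^n n!)$.

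\textbf{Adjoint.} By Theorem~\ref{theorem:cc_adjoint} and Eq.~\eqref{eqn:comp_cost_adj}, computing $\operatorname{adj}(A)$ costs $C_{adj}=O(10^{n-1}\,n\cdot n!)$.

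\textbf{Division.} Forming $A^{-1}$ requires dividing each of the $n^2$ entries of $\operatorname{adj}(A)$ by the interval $\det(A)$. We assume $0\notin\det(A)$, which is the regular case in which the inverse enclosure is meaningful. Under that assumption, Table~\ref{table:int_arith_cost} gives a cost of $O(12t)$ per interval division, so this step costs $O(12n^2)$ in units of $t$.

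\textbf{Summation.} The total cost is
\[
C_{A^{-1}} = O\!\left(10^n n! + 10^{n-1} n\, n! + 12n^2\right).
\]
Writing $10^n = 10\cdot 10^{n-1}$, the first two terms combine to $10^{n-1}(10+n)\,n!$, which is exactly the claimed expression. We keep the lower-order term $12n^2$ explicitly, as the statement does, even though it is asymptotically dominated by the factorial term.

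\textbf{Main obstacle.} The only subtle point is the division step. When $0\in\det(A)$, interval division requires extended interval arithmetic and the constant $12$ is no longer justified. We handle this by stating the regularity assumption $0\notin\det(A)$ explicitly and noting that the bound applies to that case.
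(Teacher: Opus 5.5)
Your proposal is correct and follows the paper's proof: both decompose $A^{-1}=\operatorname{adj}(A)/\det(A)$ into the determinant cost $O(10^n n!)$, the adjoint cost $O(10^{n-1} n\cdot n!)$, and $O(12n^2)$ for the entrywise divisions, then combine the first two terms as $10^{n-1}(10+n)\,n!$. Your explicit assumption $0\notin\det(A)$ plays the role of the paper's ``nonsingular'' hypothesis, and it is the more precise condition in the interval setting.
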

\begin{proof}
For a nonsingular $n \times n$ matrix $A$, the inverse can be written as $A^{-1} = \operatorname{adj}(A) / \det(A)$, where the adjugate matrix $\operatorname{adj}(A)$ is the transpose of the cofactor matrix.

Therefore, contributing factors in computational cost are $(i)$ the calculation of the determinant of $A$, $(ii)$ the computation of all $n^2$ cofactors, and $(iii)$ dividing each element of $\operatorname{adj}(A)$ by $\det(A)$.

From Eq.~\ref{eqn:comp_cost_det}, the computational cost of computing $\det(A)$ is $O(10^n n!)$. As given in Eq.~\ref{eqn:comp_cost_adj}, the cost of computing $\operatorname{adj}(A)$ is $O(10^{n-1} n \cdot n!)$. The computational cost of computing $\operatorname{adj}(A)/\det(A)$ is $O(12 n^2)$. Therefore, the computational cost of computing $A^{-1}$ is 
\begin{equation}
    T(n) = O(10^{n-1}(10 + n) n! + 12 n^2) = C_{A^{-1}}.
\end{equation}
\end{proof}

Theorems \ref{theorem:cc_determinant}-\ref{theorem:cc_inverse} establish the computational complexity of interval matrix operations when implemented via the Laplace (cofactor) expansion method, demonstrating that this approach yields extremely high, non-polynomial (exponential-factorial) complexity. 
These results indicate that, due to the practically infeasible nature of this recursive scheme, rigorous interval computations become computationally intractable; thus, they function primarily as theoretical worst-case complexity bounds rather than as a foundation for practical algorithmic implementations. 
In practical settings, conventional real-valued numerical techniques, such as Gaussian elimination and Krawczyk-based algorithms~\cite{moore_introduction_2009}, are generally employed for computing the determinants and inverses of interval matrices due to their relatively low computational complexity; see Table \ref{tab:interval_complexity}.  

The interval-valued matrix entries do not change the asymptotic dependence on the dimension $n$; they mainly affect the constant factors ($O(c)$) in the complexity bounds due to interval arithmetic operations and conceptually make it harder to obtain sharp interval enclosures.

\begin{table}[h!]
\centering
\caption{Time and space complexity of determinant and inverse computation for an $n \times n$ interval matrix}
\label{tab:interval_complexity}
\begin{tabular}{|c|c|c|c|c|}
\hline
\rowcolor{lightgray!25}
\textbf{Method} & \multicolumn{2}{c|}{\textbf{Determinant}} & \multicolumn{2}{c|}{\textbf{Inverse}} \\
\cline{2-5}
\rowcolor{lightgray!25}
 & \textbf{Time } & \textbf{Space} & \textbf{Time} & \textbf{Space} \\
\hline
Gaussian Elimination & $O(cn^3)$ & $O(cn^2)$ & $O(cn^3)$ & $O(cn^2)$ \\
\hline
Krawczyk-Based & $O(cn^3)$ & $O(cn^2)$ & $O(cn^3)$ & $O(cn^2)$ \\
\hline
\end{tabular}
\end{table}

For the reasons discussed above, in the subsequent interval-based methods, we employ the computational complexity results for interval matrix operations obtained via the Gaussian elimination technique.

\subsection{Interval Newton Method}
The Interval Newton Method is an interval version of the classical Newton--Raphson method, making it particularly advantageous in cases of parametric uncertainty or when multiple roots need to be found within a specified interval~\cite{chorasiya_quantitative_2023}. Algorithm~\ref{algo:interval_newton} summarizes the method.

\begin{algorithm}[h!]
    \caption{Interval Newton Method}
    \label{algo:interval_newton}
    \DontPrintSemicolon

    \KwIn{$f$, $X_0$, $U$, $\varepsilon$, $N_{\mathrm{it}}$}
    \KwOut{Set}

    \SetKwFunction{IntervalNewton}{IntervalNewton}
    \SetKwProg{Fn}{Function}{}{}

    \Fn{\IntervalNewton{$f$, $X$, $U$, $\varepsilon$, $N_{\mathrm{it}}$}}{
        \If{$0 \in F(X, U)$}{
            $J \leftarrow \text{Jacobian}(f, X, U)$\;
            \If{$ 0 \notin \text{determinant}(J)$}{
                \For{$i = 1,\ldots, N_{\mathrm{it}}$}{
                    $N \leftarrow \text{mid}(X) - \text{inv}(J) \cdot f(\text{mid}(X), U)$\;
                    $X_{\text{new}} \leftarrow \text{intersection}(N, X)$\;
                    \If{$\operatorname{diam}(X) - \operatorname{diam}(X_{\text{new}}) < tolerance$}{
                        \textbf{break}\;
                    }
                    $X \leftarrow X_{\text{new}}$\;
                }
                \If{$X_{\text{new}}\neq \emptyset$}{
                    \text{Save $X_{\text{new}}$ in Set}\;
                }
            }
            \Else{
                \If{$\operatorname{diam}(X) > \varepsilon$}{
                    \text{ Split $X$ into $X_l$ and $X_r$}\;
                    \IntervalNewton{$f$, $X_l$, $U$, $\varepsilon$, $N_{\mathrm{it}}$}\;
                    \IntervalNewton{$f$, $X_r$, $U$, $\varepsilon$, $N_{\mathrm{it}}$}\;
                }
                \Else{\text{Save $X$ in Set}\;}
            }
        }
        \KwRet{Set}\;
        }
    \IntervalNewton{$f$, $X_0$, $U$, $\varepsilon$, $N_{\mathrm{it}}$}\;
\end{algorithm}

In multi-steady-state systems \(0 \in F_x(X)\), where \(F_x\) is the interval extension of \(df/dx\), extended interval arithmetic is used. 
The Interval Newton Method is characterised by rapid convergence attributable to its quadratic convergence rate.
\subsubsection{Time Complexity}
In cases where multiple roots exist within the search space or the contraction is deemed inadequate, it becomes necessary to subdivide the search domain. 
\begin{theorem}\label{thm:inm_tc}
Let $N_{\mathrm{it}}$ denote the (maximum) number of interval Newton iterations performed per sub-box before termination 
. 
The worst-case computational time complexity of the interval Newton method (Algorithm~\ref{algo:interval_newton}) for computing multiple roots of the equation \(f(x,u) = 0\), where \(x \in X_0\) and \(u \in U\), is
\[
O\!\left(\frac{N_{\mathrm{it}}\, n^{3} \,\mathrm{Vol}(X_0)}{\varepsilon^{n}}\right).
\]
\end{theorem}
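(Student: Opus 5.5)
The plan follows the template of Theorem~\ref{thm:bisection_tc}: bound the number of recursive calls of Algorithm~\ref{algo:interval_newton}, bound the cost of one call, and multiply.

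\emph{Counting calls.} The recursion only splits a box when the determinant test fails and $\operatorname{diam}(X)>\varepsilon$. Splitting stops once all coordinate widths are at most $\varepsilon$. So, exactly as in the bisection analysis, the recursion tree is a binary tree whose leaves have volume at most $\varepsilon^n$. Hence there are $L=O(\mathrm{Vol}(X_0)/\varepsilon^n)$ leaves and $O(L)$ nodes, i.e.\ $O(L)$ calls. A call that passes the determinant test runs its Newton loop and returns without further recursion, so it is itself a leaf and adds nothing beyond this count.

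\emph{Cost per call.} A single call costs at most the following:
\begin{itemize}
\item one inclusion test $0\in F(X,U)$, costing $C_F$;
\item one interval Jacobian evaluation, costing $C_J=O(ckn^2)$ by the Jacobian theorem;
\item one determinant enclosure, costing $O(cn^3)$ by Gaussian elimination (Table~\ref{tab:interval_complexity}), rather than the factorial Laplace bound of Theorem~\ref{theorem:cc_determinant};
\item one diameter check, costing $O(n)$ by Corollary~\ref{cor:diam_cost}.
\end{itemize}
On top of this come at most $N_{\mathrm{it}}$ Newton iterations. Each iteration costs:
\begin{itemize}
\item an interval inverse, $O(cn^3)$ again by Gaussian elimination;
\item a point evaluation $f(\mathrm{mid}(X),U)$, costing $C_F$;
\item a matrix--vector product, $O(cn^2)$;
\item a midpoint, intersection and diameter check, each $O(n)$.
\end{itemize}
A call therefore costs $O\bigl(C_F+C_J+cn^3+N_{\mathrm{it}}(cn^3+C_F+n^2)\bigr)$.

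\emph{Absorbing lower-order terms.} To reach the stated bound, I treat $c$ as a constant of the cost model, as the paper does after Table~\ref{table:int_arith_cost}. I also assume the expression-size parameters satisfy $C_F=O(n^3)$ and $kn^2=O(n^3)$, for instance when each component and each partial derivative has $O(n)$ operations. Under these assumptions the per-call cost is $O(N_{\mathrm{it}}n^3)$, since $N_{\mathrm{it}}\ge 1$. Multiplying by $O(L)$ calls gives $O(N_{\mathrm{it}}n^3\,\mathrm{Vol}(X_0)/\varepsilon^n)$.

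\emph{Main obstacle.} The difficulty is justifying that $C_F$ and $C_J$ are absorbed into $n^3$. As stated, the theorem hides these primitive costs. I would make the domination assumption explicit, for example $C_F+C_J=O(N_{\mathrm{it}}n^3)$, and otherwise present the sharper bound $O\bigl((C_F+C_J+N_{\mathrm{it}}(n^3+C_F))\,\mathrm{Vol}(X_0)/\varepsilon^n\bigr)$, from which the stated form follows as a special case. A secondary point is checking that Newton contraction never produces additional boxes. This holds because the loop only shrinks $X$ by intersection and saves at most one box per call, so the leaf count from the bisection argument is unchanged.
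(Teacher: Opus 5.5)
Your overall route is the paper's: the number of recursive calls times the per-call cost, with the $O(cn^3)$ Gaussian-elimination inverse dominating. Your per-call accounting is more faithful to the algorithm than the paper's. You charge $C_J$ and the determinant test once per call, and you state as an explicit hypothesis (such as $C_F+C_J=O(N_{\mathrm{it}}n^3)$) what the paper only asserts as ``typically $C_{J^{-1}}\gg C_J\gg C_F$''.

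The step that fails is the call count, which you import from the bisection proof; the paper makes the same move there. The leaves of the recursion tree tile $X_0$. So knowing that each leaf has volume \emph{at most} $\varepsilon^n$ gives a \emph{lower} bound on the number of leaves, not an upper bound. Here even that premise is false, since leaves discarded by $0\notin F(X,U)$ or accepted by the determinant test can be large. What you need is a lower bound on leaf volume, and it comes from the splitting rule. Every internal node has $\operatorname{diam}(X)>\varepsilon$. If $X_0$ is a cube and each split halves a widest coordinate, then by induction every box in the tree has minimum width at least half its maximum width. Hence every non-root node has all widths $>\varepsilon/2$. This gives $L<2^n\,\mathrm{Vol}(X_0)/\varepsilon^n$ (when $L\ge 2$), with $2L-1$ calls.

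The factor $2^n$ is genuine. Take $X_0=[0,1]^n$, $\varepsilon$ slightly below $2^{1-d}$, and $f\equiv 0$, so that $0\in F(X,U)$ and $0\in\det J(X)$ on every box. The recursion then bisects everything down to width $2^{-d}$. That produces $2^{nd}\approx 2^n\,\mathrm{Vol}(X_0)/\varepsilon^n$ leaves. Already the number of calls is not $O(N_{\mathrm{it}}n^3\,\mathrm{Vol}(X_0)/\varepsilon^n)$ uniformly in $n$ (take $N_{\mathrm{it}}=1$).

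For non-cubic boxes the volume form fails even in fixed dimension. On $X_0=[0,1]\times[0,\varepsilon^3]$ there are $\Omega(1/\varepsilon)$ calls, while $\mathrm{Vol}(X_0)/\varepsilon^2=\varepsilon$.

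So, alongside your domination hypothesis, you must do one of two things. Either state the idealised uniform model the paper uses tacitly: terminal boxes are $\varepsilon$-cubes, e.g.\ $X_0$ a cube with $w(X_0)/\varepsilon$ a power of two. Or carry the extra $2^n$, equivalently replacing $\varepsilon$ by the actual terminal width. With that repair, the rest of your argument goes through, including your observation that the Newton branch never spawns new boxes.
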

\begin{proof}
The maximum number of subdivisions is proportional to $\mathrm{Vol}(X_0)/\varepsilon ^ n$. For a fixed sub-box, each interval Newton iteration requires (i) evaluating the inclusion function (cost $C_F$), (ii) computing the interval Jacobian (cost $C_J$), and (iii) computing (an enclosure of) the inverse of the interval Jacobian (cost $C_{J^{-1}}$). Thus, performing at most $N_{\mathrm{it}}$ iterations per sub-box incurs cost $O\big(N_{\mathrm{it}}(C_F + C_J + C_{J^{-1}})\big)$. Consequently, the worst-case computational time complexity is
\begin{equation}
    \text{Time Complexity} = O\left(\frac{N_{\mathrm{it}}(C_F + C_J + C_{J^{-1}})\,\mathrm{Vol}(X_0)}{\varepsilon ^ n} \right).
\end{equation}

Using Eqs.~\ref{eqn:C_F} and \ref{eqn:c_j} and standard $O(cn^3)$ enclosure costs for interval-matrix inversion via Gaussian-elimination-type methods (Table~\ref{tab:interval_complexity}), we typically have $C_{J^{-1}} \gg C_J \gg C_F$, and hence
\begin{equation}
    O\left(\frac{N_{\mathrm{it}}(C_F + C_J + C_{J^{-1}})\,\mathrm{Vol}(X_0)}{\varepsilon ^ n} \right) \approx O\left(\frac{N_{\mathrm{it}}\, n^3\,\mathrm{Vol}(X_0)}{\varepsilon^n}\right).
\end{equation}
\end{proof}
\subsubsection{Space Complexity}
\begin{theorem}\label{thm:inm_sp}
The worst-case computational space complexity of the interval Newton method (Algorithm~\ref{algo:interval_newton}) for the simultaneous computation of multiple roots of the equation \(f(x,u) = 0\), where \(x \in X_0\) and \(u \in U\), is
\[
O\!\left( n^2 \,\log_2\!\left(\frac{\operatorname{diam}(X_0)}{\varepsilon}\right) + \frac{\mathrm{Vol}(X_0)}{\varepsilon^n} \right).
\]
\end{theorem}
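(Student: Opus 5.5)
I would split the memory used by Algorithm~\ref{algo:interval_newton} into two parts and bound each separately: the \emph{working memory} held along the active recursion path, and the \emph{output set} of saved boxes. Each part produces one term of the bound in Theorem~\ref{thm:inm_sp}.

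\textbf{Depth of the recursion.} Each split in Algorithm~\ref{algo:interval_newton} bisects a box, and the recursion stops splitting once $\operatorname{diam}(X)\le\varepsilon$. I would cycle the bisection coordinate (or always bisect the widest coordinate). Then each coordinate is halved at most $\lceil\log_2(\operatorname{diam}(X_0)/\varepsilon)\rceil$ times, which bounds the depth by $D=O\bigl(n\log_2(\operatorname{diam}(X_0)/\varepsilon)\bigr)$ in the coordinate-cycling accounting. To match the stated term I will count the depth as $O(\log_2(\operatorname{diam}(X_0)/\varepsilon))$ levels of "halving the diameter", treating the per-coordinate factor as absorbed in the convention used for the bisection step.

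\textbf{Working memory per frame.} Since the recursion is depth-first, at most one frame per level is live at any time. Each frame stores:
\begin{itemize}
\item the boxes $X$, $X_{\mathrm{new}}$, $N$, $\mathrm{mid}(X)$ and $F(X,U)$, each using $O(n)$ endpoints;
\item the interval Jacobian $J(X)$, using $O(n^2)$ entries;
\item its Gaussian-elimination determinant and inverse enclosure, using $O(n^2)$ space by Table~\ref{tab:interval_complexity}.
\end{itemize}
A frame therefore costs $O(n^2)$. Multiplying by the number of levels gives the first term, $O\bigl(n^2\log_2(\operatorname{diam}(X_0)/\varepsilon)\bigr)$. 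The Newton loop reuses its storage across the $N_{\mathrm{it}}$ iterations, so $N_{\mathrm{it}}$ does not appear in the bound.

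\textbf{Output set.} A box is saved only at a leaf of the subdivision tree, either after a Newton contraction or once its width is at most $\varepsilon$. The saved boxes are contained in disjoint leaves of the tree. As in Theorem~\ref{thm:bisection_sp}, each leaf has volume at least of order $\varepsilon^n$, up to constants from the splitting convention, so there are at most $O(\mathrm{Vol}(X_0)/\varepsilon^n)$ of them. Following the convention of Theorems~\ref{thm:sub_filter_sp} and~\ref{thm:icp_sp}, I count each stored box as one unit of memory. Adding the two parts yields the claimed bound.

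\textbf{Main obstacle.} The delicate step is the depth accounting. Halving the diameter of an $n$-dimensional box needs up to $n$ bisections, which would introduce an extra factor of $n$ into the first term. I would resolve this by stating the splitting convention explicitly: each "split" reduces $\operatorname{diam}$ geometrically, with the subdivision carried out per level as in Fig.~\ref{fig:subdivision_illustrations}. Otherwise the extra factor of $n$ is simply absorbed as a lower-order term relative to $\mathrm{Vol}(X_0)/\varepsilon^n$.
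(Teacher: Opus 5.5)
Your decomposition (live recursion path with $O(n^2)$ data per frame, plus the saved leaves) is the same as the paper's. The paper simply asserts that bisecting along a coordinate gives depth $O(\log_2(\operatorname{diam}(X_0)/\varepsilon))$ and that at most $O(\mathrm{Vol}(X_0)/\varepsilon^n)$ boxes are stored. The gap is the step you flag yourself, and neither of your remedies closes it. Algorithm~\ref{algo:interval_newton} splits into two halves $X_l,X_r$. With widest-coordinate bisection the depth is therefore $D=\sum_i\max\{0,\lceil\log_2(w_i/\varepsilon)\rceil\}$, where the $w_i$ are the coordinate widths of $X_0$; for a cube this is $n\lceil\log_2(\operatorname{diam}(X_0)/\varepsilon)\rceil$. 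Redefining a split as halving $\operatorname{diam}$ analyzes a different algorithm (for a cube that is a $2^n$-way split). The absorption claim is only partly true. If every $w_i\ge 2\varepsilon$, then $\mathrm{Vol}(X_0)/\varepsilon^n\ge 2^{D/2}$, which does swamp $n^2D\le D^3$. But for $X_0=[0,2^{1/n}\varepsilon]^n$ you have $\mathrm{Vol}(X_0)/\varepsilon^n=2$ and $n^2\log_2(\operatorname{diam}(X_0)/\varepsilon)=n$, while $D=n$ and one Jacobian per live frame costs $\Theta(n^3)$.

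A repair that does yield the first term is to stop charging a Jacobian to every frame. $J$ is used only in the Newton branch, which never recurses, so it is dead in the \emph{else} branch and can be released before the recursive calls. Suspended frames then hold $O(n)$ data each (the current box and the pending sibling), so the stack costs $O(nD)=O\bigl(n^2\lceil\log_2(\operatorname{diam}(X_0)/\varepsilon)\rceil\bigr)$. The single active frame adds $O(n^2)$ for $J$, its determinant and its inverse enclosure. Reading $\log_2$ as $\lceil\log_2\rceil\ge 1$, this is the stated first term, and the argument is sounder than the paper's.

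The output term hides a similar factor. A bisection leaf can have every width just above $\varepsilon/2$, so ``volume at least of order $\varepsilon^n$'' really means at least about $2^{-n}\varepsilon^n$. The leaf count is bounded by $\prod_i\max\{1,2w_i/\varepsilon\}$, which equals $2^n\mathrm{Vol}(X_0)/\varepsilon^n$ when every $w_i\ge\varepsilon/2$. With $f\equiv 0$ on the example above, all $2^n$ leaves are saved while $\mathrm{Vol}(X_0)/\varepsilon^n=2$. Since $n$ appears explicitly in the bound, $2^n$ is not a constant. The paper glosses over this as well, so either carry the $2^n$ or state explicitly that you adopt the paper's convention of absorbing it into the splitting-convention constant.
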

\begin{proof}
There are two dominant memory contributions:
$(1)$ \emph{Stored boxes.} In the worst case, the method may need to retain up to \(O\!\left(\mathrm{Vol}(X_0)/\varepsilon^n\right)\) boxes (e.g., in a queue/worklist or in the output set).
$(2)$ \emph{Recursion stack and per-call Jacobians.} If the method bisects along a coordinate until widths are $\le\varepsilon$, the recursion depth is at most \(O\!\left(\log_2(\operatorname{diam}(X_0)/\varepsilon)\right)\). If each active call stores an interval Jacobian (an $n\times n$ interval matrix), this contributes \(O\!\left(n^2\log_2(\operatorname{diam}(X_0)/\varepsilon)\right)\) memory.
Summing these contributions yields the stated bound.
\end{proof}

\paragraph{Remark} The Interval Newton Method achieves quadratic contraction of search boxes, significantly reducing the number of subdivisions and computational effort for well-defined systems.
Despite its worst-case exponential complexity, the method exhibits practical efficiency when initial enclosures are favorable and the Jacobians are well-conditioned.

Within the framework of a system of equations, this method requires dividing by the interval enclosure of the Jacobian matrix; thus, when the system is ill-conditioned and the Jacobian matrix approaches singularity, convergence becomes problematic. 
The interval Krawczyk method addresses this problem differently: instead of working with an interval matrix, it uses the inverse of a real-valued midpoint matrix. This typically leads to better numerical stability, especially for ill-conditioned systems.

\subsection{Interval Krawczyk Method}
The Krawczyk method constitutes an alternative form of Newton's method, obviating the necessity for computing the inverse of an interval matrix~\cite{moore_introduction_2009,jaulin_applied_2001}.
In comparison to the Interval Newton Method, the Interval Krawczyk Method exhibits a reduced rate of convergence; however, it demonstrates enhanced robustness when applied to ill-conditioned multidimensional systems.
Algorithm~\ref{algo:interval_krawczyk} summarizes the method.

\begin{algorithm}[h!]
    \caption{Interval Krawczyk Method}
    \label{algo:interval_krawczyk}
    \DontPrintSemicolon

    \KwIn{$f$, $X_0$, $U$, $\varepsilon$, $N_{\mathrm{it}}$}
    \KwOut{Set}

    \SetKwFunction{IntervalKrawczyk}{IntervalKrawczyk}
    \SetKwProg{Fn}{Function}{}{}

    \Fn{\IntervalKrawczyk{$f$, $X$, $U$, $\varepsilon$, $N_{\mathrm{it}}$}}{
        \If{$0 \in F(X, U)$}{
            $J \leftarrow \text{Jacobian}(f, X, U)$\;
            \If{$ 0 \notin \text{determinant}(J)$}{
                \For{$i = 1,\ldots, N_{\mathrm{it}}$}{
                    $x_0 \leftarrow \text{mid}(X)$\;
                    $Y \leftarrow \text{inverse}(\text{mid}(J))$\;
                    $K \leftarrow x_0 - Y f(x_0,U) + (I - Y J) (X - x_0)$\;
                    $X_{\text{new}} \leftarrow \text{intersection}(K, X)$\;
                    \If{$\operatorname{diam}(X) - \operatorname{diam}(X_{\text{new}}) < tolerance$}{
                        \textbf{break}\;
                    }
                    $X \leftarrow X_{\text{new}}$\;
                }
                \If{$X_{\text{new}} \neq \emptyset$}{
                    \text{Save $X$ in Set}\;
                }
            }
            \Else{
                \If{$\operatorname{diam}(X) > \varepsilon$}{
                    \text{ Split $X$ into $X_l$ and $X_r$}\;
                    \IntervalKrawczyk{$f$, $X_l$, $U$, $\varepsilon$, $N_{\mathrm{it}}$}\;
                    \IntervalKrawczyk{$f$, $X_r$, $U$, $\varepsilon$, $N_{\mathrm{it}}$}\;
                }
                \Else{\text{Save $X$ in Set}\;}
            }
        }
        \KwRet{Set}\;
    }
    \IntervalKrawczyk{$f$, $X_0$, $U$, $\varepsilon$, $N_{\mathrm{it}}$}\;
\end{algorithm}

\subsubsection{Time Complexity}
In the presence of multiple roots, a bisection of the search space is employed. 
\begin{theorem}\label{thm:ikm_tc}
Let $N_{\mathrm{it}}$ denote the (maximum) number of Krawczyk iterations performed per sub-box before termination 
. The worst-case computational time complexity of the Krawczyk method (Algorithm~\ref{algo:interval_krawczyk}) employed to compute multiple roots of the equation \(f(x,u)=0\), with \(x \in X_0\) and \(u \in U\), is
\[
O\!\left(\frac{N_{\mathrm{it}}\, n^{3} \,\mathrm{Vol}(X_0)}{\varepsilon^{n}}\right).
\]
\end{theorem}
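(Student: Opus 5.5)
The proof will follow the same structure as Theorem~\ref{thm:inm_tc}: bound the number of sub-boxes the recursion can generate, bound the per-box cost, and multiply. \textbf{Step 1 (number of boxes).} Algorithm~\ref{algo:interval_krawczyk} branches only in the ``else'' case, where it bisects $X$ into $X_l$ and $X_r$, and only while $\operatorname{diam}(X)>\varepsilon$. Its recursion tree is therefore a binary subdivision tree whose leaves have coordinate-wise width at most $\varepsilon$, hence volume at most $\varepsilon^n$. As in the proof of Theorem~\ref{thm:bisection_tc}, this gives at most $O(\mathrm{Vol}(X_0)/\varepsilon^n)$ leaves and $O(\mathrm{Vol}(X_0)/\varepsilon^n)$ nodes in total. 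Every node is either a leaf or an internal bisection call, and the Krawczyk loop runs only at leaves.

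\textbf{Step 2 (per-box cost).} At each node the algorithm performs the following operations:
\begin{itemize}
\item one inclusion evaluation $F(X,U)$, of cost $C_F$ by Theorem~\ref{thm:CF_cost};
\item one interval Jacobian $J(X)$, of cost $C_J=O(ckn^2)$ by Eq.~\eqref{eqn:c_j};
\item one interval determinant test, of cost $O(cn^3)$ by Gaussian elimination (Table~\ref{tab:interval_complexity}), not the factorial Laplace bound of Theorem~\ref{theorem:cc_determinant}.
\end{itemize}
At a leaf where the determinant test succeeds, each of at most $N_{\mathrm{it}}$ iterations performs these operations:
\begin{itemize}
\item the midpoint $x_0$, in $O(n)$;
\item the real inverse $Y=\operatorname{mid}(J)^{-1}$, in $O(n^3)$ by Gaussian elimination;
\item the point evaluation $f(x_0,U)$, in $O(C_F)$;
\item the matrix--vector product $Yf(x_0,U)$, in $O(n^2)$;
\item the real-times-interval matrix product $YJ$ and the subtraction $I-YJ$, in $O(cn^3)$ and $O(cn^2)$;
\item the interval matrix--vector product $(I-YJ)(X-x_0)$, in $O(cn^2)$;
\item the intersection with $X$ and the two diameter checks, in $O(n)$ by Corollary~\ref{cor:diam_cost}.
\end{itemize}
The per-leaf cost is therefore $O\bigl(C_F+C_J+cn^3+N_{\mathrm{it}}(C_F+n^3)\bigr)$, which is at most $O\bigl(N_{\mathrm{it}}(C_F+C_J+C_{\mathrm{lin}})\bigr)$ with $C_{\mathrm{lin}}=O(cn^3)$.

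\textbf{Step 3 (combine).} Multiplying the per-box cost by the box count gives
\[
O\!\left(\frac{N_{\mathrm{it}}\,(C_F+C_J+cn^3)\,\mathrm{Vol}(X_0)}{\varepsilon^n}\right).
\]
Treating $c$ and $k$ as constants independent of $n$, as in the proof of Theorem~\ref{thm:inm_tc}, gives $C_F=O(ck)$ and $C_J=O(ckn^2)$. Both are dominated by the $n^3$ linear-algebra term, which yields the stated bound.

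\textbf{Main obstacle.} The delicate point is showing that no single step exceeds $O(n^3)$, since Krawczyk trades the interval inverse for the real inverse $\operatorname{mid}(J)^{-1}$ but adds the product $YJ$. I would argue carefully that $YJ$ multiplies a real matrix by an interval matrix, so each of its $n^3$ scalar products costs $O(c)$. This keeps it at $O(cn^3)$ rather than anything larger. I would also state explicitly the assumption that $k$ does not grow with $n$, since otherwise $C_J$ could dominate.
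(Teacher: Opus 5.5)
Your argument takes the same route as the paper's proof: bound the number of boxes by $O(\mathrm{Vol}(X_0)/\varepsilon^n)$, bound the per-box work by $O\bigl(N_{\mathrm{it}}(C_F+C_J+n^3)\bigr)$, and let the $n^3$ term dominate. The $O(n^3)$ inversion of the point matrix $\operatorname{mid}(J)$ takes the place of $C_{J^{-1}}$. Your per-box accounting is more complete than the paper's, which never charges for the determinant test or for the interval product $YJ$. Both are $O(cn^3)$ and leave the bound unchanged. You are also right to treat $C_F+C_J=O(n^3)$ as an assumption, where the paper only asserts it. What is actually needed is $C_F=O(n^3)$ and an $O(n)$ cost per Jacobian entry. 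The $k$ of Theorem~\ref{thm:CF_cost} counts operations over all $n$ components and typically grows with $n$, so it cannot be held constant.

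The step that fails as written is Step~1, which you take, as the paper does, from the proof of Theorem~\ref{thm:bisection_tc}. First, the leaves are not all of width at most $\varepsilon$: boxes discarded by $0\notin F(X,U)$ and boxes on which the Krawczyk loop runs can be large. More importantly, an \emph{upper} bound $\varepsilon^n$ on leaf volume gives a \emph{lower} bound $\mathrm{Vol}(X_0)/\varepsilon^n$ on the number of leaves, not an upper one.

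The count has to come from the internal nodes. A box is split only when its diameter exceeds $\varepsilon$, so suppose you always bisect a coordinate of width greater than $\varepsilon$. Then every side of every box in the tree has length at least $\min(\ell_i,\varepsilon/2)$, where $\ell_i$ is the $i$-th side of $X_0$. The leaves partition $X_0$, so there are at most $\mathrm{Vol}(X_0)/\prod_i\min(\ell_i,\varepsilon/2)$ of them, and fewer than twice as many nodes. This gives at most $2^n\,\mathrm{Vol}(X_0)/\varepsilon^n$ leaves when no side of $X_0$ is shorter than $\varepsilon/2$.

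The factor $2^n$ is attained: take $X_0=[0,1]^n$ with no box pruned or verified, and $\varepsilon$ slightly below a power of $2$. So it is not a constant once $n$ is a variable. Your argument proves the stated bound with $\varepsilon$ replaced by $\varepsilon/2$. You should either carry the $2^n$ explicitly or say that you are absorbing it (the paper has the same gap).
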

\begin{proof}
The proof follows a similar approach to the interval Newton method, with the distinction that the interval Krawczyk method inverts a real-valued (point) matrix rather than an interval-valued matrix. Inverting a point matrix via Gaussian elimination costs $O(n^3)$. Consequently, the time complexity is
\begin{equation}
    \text{Time Complexity} = O\left( \frac{N_{\mathrm{it}}(C_F + C_J + n^3)\,\mathrm{Vol}(X_0)}{\varepsilon ^ n} \right)
\end{equation}
As previously discussed, $O(n^3) \gg C_J \gg C_F$,
\begin{equation}
    \implies O\left( \frac{N_{\mathrm{it}}(C_F + C_J + n^3)\,\mathrm{Vol}(X_0)}{\varepsilon ^ n} \right) \approx O\left(\frac{N_{\mathrm{it}}\, n^3\,\mathrm{Vol}(X_0)}{\varepsilon^n} \right)
\end{equation}
\end{proof}
Two clarifications are worth noting. (i) While the adjugate/cofactor formula has factorial complexity, practical inversion of a real (point) matrix is performed using Gaussian-elimination-type methods with $O(n^3)$ time (and often with good constant factors in optimized libraries)~\cite{jeannerod_essentially_2005}. (ii) Although interval Newton and interval Krawczyk share the same leading-order worst-case scaling, Krawczyk is typically faster (per iteration) in practice because it inverts only the real-valued midpoint matrix $\mathrm{mid}(J)$ rather than an interval matrix.
\subsubsection{Space Complexity}
\begin{theorem}\label{thm:ikm_sp}
The worst-case computational space complexity of the Krawczyk method (Algorithm~\ref{algo:interval_krawczyk}) for computing multiple zeros of the equation \(f(x,u)=0\), with \(x \in X_0\) and \(u \in U\), is
\[
O\!\left( n^2 \,\log_2\!\left(\frac{\operatorname{diam}(X_0)}{\varepsilon}\right) + \frac{\mathrm{Vol}(X_0)}{\varepsilon^n} \right).
\]
\end{theorem}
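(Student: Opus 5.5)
The plan is to mirror the proof of Theorem~\ref{thm:inm_sp} for the interval Newton method, since Algorithm~\ref{algo:interval_krawczyk} has the same recursive control structure as Algorithm~\ref{algo:interval_newton}. I would split the memory usage into two additive contributions and bound each separately. The first is the stored output (or worklist) boxes. The second is the transient per-call storage held on the recursion stack.

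\textbf{Stored boxes.} Boxes are saved only at termination. That happens either after the Krawczyk loop on a sub-box, or when $\operatorname{diam}(X)\le\varepsilon$ in the splitting branch. The saved boxes are pairwise disjoint (up to boundaries) sub-boxes of $X_0$. In the worst case, with no successful verification, every retained leaf has coordinate-wise widths of order $\varepsilon$, exactly as in Theorem~\ref{thm:bisection_tc}. So at most $O\!\left(\mathrm{Vol}(X_0)/\varepsilon^n\right)$ boxes are kept.

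Each box formally needs $O(n)$ endpoints. Following the convention already used in Theorem~\ref{thm:inm_sp}, I would count boxes as units of storage, so this term contributes $O\!\left(\mathrm{Vol}(X_0)/\varepsilon^n\right)$.

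\textbf{Recursion stack and per-call matrices.} The recursion only descends through the \texttt{Else} branch, where $X$ is bisected. Bisecting along a coordinate halves the width, so the depth until all widths are at most $\varepsilon$ is $O\!\left(\log_2(\operatorname{diam}(X_0)/\varepsilon)\right)$, in the per-coordinate sense used for Theorem~\ref{thm:inm_sp}.

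Each active frame stores the interval Jacobian $J\in\IR^{n\times n}$. Within the Krawczyk loop, a frame also holds the point matrix $Y=\mathrm{mid}(J)^{-1}$ and the interval matrix $I-YJ$. Each of these is $O(n^2)$, computed with Gaussian elimination in the $O(n^2)$ space listed in Table~\ref{tab:interval_complexity}. A frame additionally holds the vectors $x_0$, $f(x_0,U)$, $K$ and $X_{\mathrm{new}}$, each $O(n)$.

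A constant number of $n\times n$ matrices per frame, times the depth, gives $O\!\left(n^2\log_2(\operatorname{diam}(X_0)/\varepsilon)\right)$. Summing the two contributions yields the stated bound.

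\textbf{Main obstacle.} The delicate point is justifying that the stack depth is only logarithmic while still being compatible with the $\varepsilon^{-n}$ leaf count. Cycling bisection across $n$ coordinates actually gives depth $n\log_2(\cdot)$. I would handle this as the paper does, reading the depth per coordinate. Alternatively, I would note that matrices from the Krawczyk loop are freed before recursion: recursion occurs only in the branch where no loop ran, so only $J$ persists per frame. Either way the $n^2$ factor times a logarithmic depth is dominated, or absorbed, in the stated form.
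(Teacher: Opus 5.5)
Your proposal is correct at the paper's level of rigor and follows essentially the same route as the paper's proof. Like the paper, you bound the stored output boxes by $O(\mathrm{Vol}(X_0)/\varepsilon^n)$ and add a recursion stack of depth $O(\log_2(\operatorname{diam}(X_0)/\varepsilon))$ in which each active call holds an $O(n^2)$ interval Jacobian. The depth caveat you raise is genuine, and the paper glosses over it in the same way. For a cube $X_0$ with $\operatorname{diam}(X_0)\ge 2\varepsilon$, the resulting $O(n^3\log_2(\operatorname{diam}(X_0)/\varepsilon))$ stack term is dominated by $(\operatorname{diam}(X_0)/\varepsilon)^n=\mathrm{Vol}(X_0)/\varepsilon^n$, which justifies your ``absorbed'' claim in that case.
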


\begin{proof}
The memory usage is dominated by (i) storing up to $O\!\left(\mathrm{Vol}(X_0)/\varepsilon^n\right)$ boxes in the worklist/output in the worst case, and (ii) storing $O\!\left(\log_2(\operatorname{diam}(X_0)/\varepsilon)\right)$ active calls in a recursion stack when subdivision is used, each potentially holding an $n\times n$ interval Jacobian (cost $O(n^2)$). Summing these terms gives the stated bound.
\end{proof}
\noindent

Table~\ref{tab:complexity_summary} summarizes the worst-case time and space complexity bounds derived above. For bisection/Newton/Krawczyk, the bounds are expressed in terms of the target tolerance $\varepsilon$, whereas subdivision$+$filter and constraint propagation are naturally parameterized by the fixed subdivision parameter $m$.

\begin{table}[t]
    \caption{Worst-case time and space complexities of the interval methods (upper bounds; parameterization depends on the method).}
    \label{tab:complexity_summary}
    \centering
    \small
    {\setlength{\tabcolsep}{2pt}
    \renewcommand{\arraystretch}{2.0}
    \renewcommand\tabularxcolumn[1]{m{#1}}
    \begin{tabularx}{\textwidth}{|>{\raggedright\arraybackslash}m{3.74cm}|>{\raggedright\arraybackslash}X|>{\raggedright\arraybackslash}m{6.5cm}|}
        \hline
        \rowcolor{lightgray!25}
        \textbf{Method} & \textbf{Time Complexity} & \textbf{Space Complexity} \\
        \hline
        Interval Bisection & $O\!\left(\dfrac{(C_F+n)\,\mathrm{Vol}(X_0)}{\varepsilon^n}\right)$ & $O\!\left(\dfrac{\mathrm{Vol}(X_0)}{\varepsilon^n}\right)$ \\
        \hline
        Subdivision $+$ Filter & $O\!\left(m^n\,C_F\right)$ & $O\!\left(m^n\right)$ \\
        \hline
        Constraint Propagation & $O\!\left(m^n\,N_{\mathrm{it}}\,(C_{\mathrm{con}}+n)\right)$ & $O\!\left(m^n\right)$ \\
        \hline
        Interval Newton & $O\!\left( N_{\mathrm{it}}\,(C_F+C_J+C_{J^{-1}})\,\dfrac{\mathrm{Vol}(X_0)}{\varepsilon^n}\right)$ & $O\!\left(\dfrac{\mathrm{Vol}(X_0)}{\varepsilon^n} + n^2\log_2\!\left(\dfrac{\operatorname{diam}(X_0)}{\varepsilon}\right)\right)$ \\
        \hline
        Interval Krawczyk & $O\!\left( N_{\mathrm{it}}\,(C_F+C_J+n^3)\,\dfrac{\mathrm{Vol}(X_0)}{\varepsilon^n}\right)$ & $O\!\left(\dfrac{\mathrm{Vol}(X_0)}{\varepsilon^n} + n^2\log_2\!\left(\dfrac{\operatorname{diam}(X_0)}{\varepsilon}\right)\right)$ \\
        \hline
    \end{tabularx}}
\end{table}

\subsection{Numerical Experiment}
\label{subsec:numerical_experiments}
\begingroup
\setlength{\floatsep}{2pt plus 1pt minus 1pt}
\setlength{\dblfloatsep}{2pt plus 1pt minus 1pt}
\setlength{\textfloatsep}{4pt plus 1pt minus 1pt}
\setlength{\dbltextfloatsep}{4pt plus 1pt minus 1pt}
\setlength{\intextsep}{4pt plus 1pt minus 1pt}
\setlength{\abovecaptionskip}{1pt}
\setlength{\belowcaptionskip}{1pt}
\captionsetup[table]{skip=1pt}

The complexity bounds in the preceding section are worst-case guarantees; we therefore complement them with numerical experiments that quantify observed workload and runtime, and help identify which validated primitives (interval evaluation, contraction, or interval linear algebra) dominate in representative instances.

\paragraph{Application context} Both test problems are steady-state computations for biochemical/synthetic-biology ODE models with parametric uncertainty. In these applications, steady states represent experimentally observable operating points (e.g., expression levels or RNA concentrations), and uncertainty intervals reflect cell-to-cell variability, measurement noise, and imperfectly identified kinetic constants. A validated enclosure of all steady states consistent with $u\in U$ supports application-driven tasks such as robust design (ensuring a circuit exhibits the intended qualitative behavior across parameter ranges), robust operating-point certification, and uncertainty-aware inference/identifiability screening. In particular, the output enclosure can be used to (i) certify that \emph{no} steady state lies in an undesirable range (safety/robustness screening) and (ii) quantify whether multiple operating points are possible under bounded uncertainty (multistability screening).

\paragraph{Computational scope} We report results for dimensions $n\in\{2,5,10\}$ for both examples (Tables~\ref{tab:runtime_vs_volume}, \ref{tab:ex1_n5}, \ref{tab:ex1_n10}, \ref{tab:ex2_i2}, \ref{tab:ex2_i5}, and \ref{tab:ex2_i10}). The higher-dimensional instances process up to $8.76\times 10^7$ boxes (Table~\ref{tab:ex2_i10}) and require up to $\approx 4.15\times 10^3$ seconds wall-clock time on a single workstation, providing substantial computational evidence alongside the worst-case complexity bounds.

All methods were implemented in Julia v1.11.0 using IntervalArithmetic v0.22.36 for interval evaluation and ReversePropagation v0.3.0 for contractor-based pruning. Experiments were executed on a single workstation in a single-process setting with a 2~GHz quad-core Intel Core~i5 CPU, 16~GB 3733~MHz LPDDR4X memory, and macOS~26.4.1 (25E253). Unless stated otherwise, each timing is reported as the median wall-clock time over 10 independent runs. All runs were performed in a single-threaded setting (one Julia thread), and we exclude one warm-up run to avoid just-in-time compilation effects in the reported timings.

We report the number of processed (visited) boxes $N_{\mathrm{proc}}$, the number of retained boxes $N_{\mathrm{keep}}$ in the final enclosure, and the median wall-clock time $T_{\mathrm{med}}$ in seconds. For iteration-based methods, Avg. Iter. denotes the average number of inner contractor/Newton/Krawczyk iterations per processed box.

Bisection/Newton/Krawczyk are parameterized by a tolerance $\varepsilon$ and terminate when each retained box has coordinate-wise width at most $\varepsilon$. Subdivision$+$filter uses a uniform discretization parameter $m$ (approximately $m^n$ boxes in $n$ dimensions) with no adaptive refinement beyond the chosen grid. Throughout, for fixed-grid methods we define $N_{\mathrm{proc}}$ to be the number of grid boxes processed (i.e., $N_{\mathrm{proc}}=m^n$); the initial domain $X_0$ is \emph{not} counted as an additional processed box. Constraint propagation uses an iteration limit $l$ (maximum number of contraction iterations per box). For iteration-based methods, an additional global cap $N_{\mathrm{it}}$ is enforced per processed box to prevent excessive inner-loop work.

We consider an uncertain steady-state problem of the form $f(x,u)=0$ with $x\in X_0$ and $u\in U$, where $U$ is an interval parameter set. The enclosure goal is to return a collection of boxes whose union contains all solutions in $X_0$ for all parameters $u\in U$. In this experiment, we employ a Hill-type regulatory-network steady-state model (commonly used in synthetic gene circuit modeling) following~\cite{gardner_construction_2000}, defined as follows:
\[
    f(x,u)=\begin{bmatrix}
        0.5 + \dfrac{\alpha_1}{1+x_n^{h}} - \gamma x_1\\
        0.5 + \dfrac{\alpha_2}{1+x_1^{h}} - \gamma x_2\\
        \vdots\\
        0.5 + \dfrac{\alpha_n}{1+x_{n-1}^{h}} - \gamma x_n
    \end{bmatrix},
\]
where $h$ is the Hill coefficient (in our experiments, we set $h=10$).
\[
    x=(x_1,\dots,x_n)\in X_0=[0,10]^n,\quad
    u=(\alpha_1,\dots,\alpha_n,\gamma)\in U=[3.8,4.2]^n\times[0.95,1.05].
\]
As a second numerical example, we consider a biochemical WTA network employing \emph{in vitro} transcriptional circuits as delineated in~\cite{kim_neural_2004}. The network consists of DNA switches, RNA polymerase (RNAP), and RNase. Enzyme saturation governed by Michaelis--Menten kinetics is used to induce global inhibition, as opposed to traditional neuron-like inhibition facilitated through explicit interconnections. Each DNA switch competes for the limited shared resource RNAP: a switch in the ON state sequesters more RNAP than one in the OFF state, thereby attenuating the activation of competing switches. This ultimately results in a WTA dynamic without the necessity for mutual inhibitory connections. The mathematical model is given by
\begin{equation}\label{eqn:wta_model}
\frac{d x_i}{dt} = \frac{E_{\text{tot}}}{1 + L} \left(
\frac{k_{\text{cat}}}{K_M} [D_{ii} x_i]
+ \frac{k'_{\text{cat}}}{K'_M} [D_{ii}]
\right) - \frac{E_{\text{tot},d}}{1 + L_d} \cdot \frac{k_{d,\text{cat}}}{K_{d,M}} x_i.
\end{equation}
Here, $x_i$ refers to the concentration of activator RNA for node $i$. $E_{\text{tot}}$ and $E_{\text{tot},d}$ are the total concentrations of RNA polymerase and RNase, respectively, and $L$ and $L_d$ denote the corresponding loads. The parameters $k_{\text{cat}},\,k'_{\text{cat}},\,k_{d,\text{cat}}$ are catalytic rate constants and $K_M,\,K'_M,\,K_{d,M}$ are Michaelis constants for the corresponding enzyme--substrate pairs. The quantities $[D_{ii}x_i]$ and $[D_{ii}]$ denote the concentrations of the DNA switch templates in the ON and OFF states, respectively.

The uncertain parameters are taken as
\(
D_{\mathrm{tot}} \in [1.98,\, 2.02], \quad
E_{\mathrm{tot}} \in [0.099,\, 0.101], \\ \quad 
K_M \in [0.0099,\, 0.0101], \quad
K'_M \in [0.099,\, 0.101], \quad
k_{\mathrm{cat}} \in [0.0297,\, 0.0303], \quad
k'_{\mathrm{cat}} \in [0.01188,\, 0.01212], \quad \\
k_{d,\mathrm{eff}} \in [0.00198,\, 0.00202], \quad
K_A \in [0.99,\, 1.01].
\)

\paragraph{How the examples differ} Example~1 (Hill network) is a smooth algebraic steady-state map with rational nonlinearities and a sparse ring structure (each component depends on one upstream state), with uncertainty in the production/degradation parameters $(\alpha_i,\gamma)$. It represents a prototypical uncertain regulatory steady-state problem for certifying feasible operating ranges (and possible multistability) under bounded parameter variation. Example~2 (WTA transcriptional circuit), derived from Michaelis--Menten saturation and global resource loading, adds coupling via load terms and sharper nonlinearities from enzyme saturation. It models competition for a shared resource and tests whether winner-take-all behavior (multiple equilibria with different ``winners'') persists under parametric uncertainty. Thus, Example~1 mainly probes how inclusion-function evaluation and Jacobian-based contraction scale with $n$, while Example~2 probes how stronger coupling/saturation affects pruning and the relative cost of contractor and interval linear-algebra primitives.

For interval bisection, we terminate when every retained sub-box has coordinate-wise width at most $\varepsilon$. For subdivision$+$filter, we fix the subdivision parameter $m$ (i.e., $m$ parts per dimension) and process the resulting set of sub-boxes, retaining only those boxes $X$ for which $0\in F(X,U)$; there is no adaptive termination beyond the chosen $m$. For iteration-based methods (interval constraint propagation, interval Newton, and Krawczyk), we cap the number of iterations per processed sub-box by $N_{\mathrm{it}}$ and terminate early on a sub-box if the decrease in its maximum diameter between successive iterations falls below the tolerance.
\begin{table}[h!]
\caption{Example 1 ($n=2$): Numerical experiment results as a function of the initial search region. We use $X_0=V_1=[0,10]^2$ and $X_0=V_2=[0,20]^2$, with $\mathrm{Vol}(V_2)=4\,\mathrm{Vol}(V_1)$.}
\label{tab:runtime_vs_volume}
\centering
{\setlength{\tabcolsep}{1.7pt}
\renewcommand{\arraystretch}{0.95}
\scriptsize
\renewcommand\tabularxcolumn[1]{>{\centering\arraybackslash}m{#1}}
\begin{tabularx}{\textwidth}{|>{\centering\arraybackslash}m{1.8cm}|>{\centering\arraybackslash}m{1.55cm}|*{8}{X|}}
\hline
\rowcolor{lightgray!25}
\textbf{Method} & \textbf{Setting} & \multicolumn{4}{c|}{$\boldsymbol{V_1} = [0, 10]^2$} & \multicolumn{4}{c|}{$\boldsymbol{V_2}=[0, 20]^2$}\\
\hline
\rowcolor{lightgray!10}
& & $\mathbf{N_{\mathrm{proc}}}$ & $\mathbf{N_{\mathrm{keep}}}$ & \textbf{Avg. Iter.} & \textbf{Time (s)} & $\mathbf{N_{\mathrm{proc}}}$ & $\mathbf{N_{\mathrm{keep}}}$ & \textbf{Avg. Iter.} & \textbf{Time (s)}\\
\hline
Bisection & $\varepsilon=10^{-3}$ & 485497 & 235585 & -- & 4.7 & 494709 & 240289 & -- & 4.8 \\ \hline
Subdivision $+$ Filter & $m=100$ & 10000 & 43 & -- & 0.095 & 10000 & 13 & -- & 0.095 \\ \hline
Constraint Propagation & $m=50$ & 2500 & 11 & 1.02 & 0.033 & 2500 & 7 & 1.02 & 0.035 \\ \hline
Newton & $\varepsilon=10^{-3}$ & 103 & 5 & 2.22 & 0.003 & 119 & 7 & 2.60 & 0.0035 \\ \hline
Krawczyk & $\varepsilon=10^{-3}$ & 103 & 5 & 4.22 & 0.004 & 119 & 7 & 4.60 & 0.0045 \\ \hline
\end{tabularx}}
\end{table}
Table~\ref{tab:runtime_vs_volume} provides a compact comparison of how workload and runtime change as the initial search region increases from $V_1=[0,10]^2$ to $V_2=[0,20]^2$ (a fourfold increase in area).

For the fixed-grid methods, the processed-box count is essentially set by the discretization parameter: subdivision$+$filter processes $N_{\mathrm{proc}}=m^2=10{,}000$ boxes for both domains and its runtime remains nearly constant ($\approx 0.095$~s), while constraint propagation processes $m^2=2{,}500$ boxes for both domains and exhibits only a small runtime increase ($0.033$~s to $0.035$~s). The retained-box counts, however, decrease noticeably as $X_0$ is enlarged (e.g., 43 to 13 for subdivision$+$filter), indicating that most of the additional region is excluded at coarse resolution.

For the adaptive methods, the effect of the larger search region is also modest in this instance. Bisection (tolerance $\varepsilon=10^{-3}$) shows only a slight increase in processed and retained boxes (485{,}497 to 494{,}709; 235{,}585 to 240{,}289) and a correspondingly small runtime change (4.7~s to 4.8~s), suggesting that the problem structure confines the ``difficult'' region to a similar subset of both domains. Interval Newton and Krawczyk process more boxes at $V_2$ (103 to 119) and require slightly more iterations on average (Newton: 2.22 to 2.60; Krawczyk: 4.22 to 4.60), yielding small runtime increases (0.0030~s to 0.0035~s; 0.0040~s to 0.0045~s). Notably, Newton achieves similar box counts with fewer iterations than Krawczyk in this example, consistent with the different contraction mechanisms.

Overall, Table~\ref{tab:runtime_vs_volume} reinforces two points that complement the worst-case bounds: (i) for fixed-grid variants, the choice of $m$ largely determines the workload (hence weak sensitivity to $\mathrm{Vol}(X_0)$ when $m$ is held fixed), and (ii) contraction/verification can keep the incremental cost of a larger initial search region small when large portions of the domain are certified infeasible early. Importantly, the theoretical expressions in Table~\ref{tab:complexity_summary} are worst-case upper bounds that assume refinement proceeds broadly over the search domain until the termination criterion is met; they therefore scale with $\mathrm{Vol}(X_0)$ as a conservative guarantee. In typical instances, however, the observed runtime is governed by an instance-dependent \emph{effective volume}--the subset of $X_0$ on which the inclusion test or contractor fails to quickly certify infeasibility. Here, by ``easy'' region we mean boxes for which infeasibility can be certified at coarse resolution (e.g., $0\notin F(X,U)$) or that are rapidly contracted, so that little or no further subdivision/iteration is required. When enlarging $X_0$ adds mostly such region, $N_{\mathrm{proc}}$ and runtime may increase sublinearly with $\mathrm{Vol}(X_0)$, as seen here despite a fourfold increase in area from $V_1$ to $V_2$.



\begin{table}[h!]
\caption{Example 1 ($n=5$): Numerical experiment results ($X_0=[0,10]^5$).}
\label{tab:ex1_n5}
\centering
{\setlength{\tabcolsep}{1.7pt}
\renewcommand{\arraystretch}{0.95}
\scriptsize
\renewcommand\tabularxcolumn[1]{>{\centering\arraybackslash}m{#1}}
\begin{tabularx}{\textwidth}{|>{\centering\arraybackslash}m{1.8cm}|>{\centering\arraybackslash}m{1.55cm}|*{4}{X|}}
\hline
\rowcolor{lightgray!25}
\textbf{Method} & \textbf{Setting} & $\mathbf{N_{\mathrm{proc}}}$ & $\mathbf{N_{\mathrm{keep}}}$ & \textbf{Avg. Iter.} & \textbf{Time (s)}\\
\hline
Bisection & $\varepsilon=10^{-2}$ & 13771 & 3774 & -- & 0.4 \\ \hline
Subdivision $+$ Filter & $m=5$ & 3125 & 31 & -- & 0.063 \\ \hline
Constraint Propagation & $l=5$ & 3125 & 1 & 1 & 0.12 \\ \hline
Newton & $\varepsilon=10^{-2}$ & 1361 & 1 & 1.33 & 0.21 \\ \hline
Krawczyk & $\varepsilon=10^{-2}$ & 1361 & 1 & 9.4 & 0.42 \\ \hline
\end{tabularx}}
\end{table}

\begin{table}[h!]
\caption{Example 1 ($n=10$): Numerical experiment results ($X_0=[0,10]^{10}$).}
\label{tab:ex1_n10}
\centering
{\setlength{\tabcolsep}{1.7pt}
\renewcommand{\arraystretch}{0.95}
\scriptsize
\renewcommand\tabularxcolumn[1]{>{\centering\arraybackslash}m{#1}}
\begin{tabularx}{\textwidth}{|>{\centering\arraybackslash}m{1.8cm}|>{\centering\arraybackslash}m{1.55cm}|*{4}{X|}}
\hline
\rowcolor{lightgray!25}
\textbf{Method} & \textbf{Setting} & $\mathbf{N_{\mathrm{proc}}}$ & $\mathbf{N_{\mathrm{keep}}}$ & \textbf{Avg. Iter.} & \textbf{Time (s)}\\
\hline
Bisection & $\varepsilon=10^{-1}$ & 5749985 & 322103 & -- & 1476.6551 \\ \hline
Subdivision $+$ Filter & $m=5$ & 9765625 & 1025 & -- & 340.5 \\ \hline
Constraint Propagation & $l=5$ & 9765625 & 3 & 1.00 & 729 \\ \hline
Newton & $\varepsilon=10^{-2}$ & 330277 & 50 & 1.075 & 102.62 \\ \hline
Krawczyk & $\varepsilon=10^{-2}$ & 330277 & 3 & 2 & 109.5 \\ \hline
\end{tabularx}}
\end{table}

\begin{table}[h!]
\caption{Example 2 ($i=\{1,2\}$): Numerical experiment results ($X_0=[0,2]^2$). For bisection/Newton/Krawczyk, we set $\varepsilon=10^{-2}\,\min\bigl(\mathrm{diam}(X_0)\bigr)$.}
\label{tab:ex2_i2}
\centering
{\setlength{\tabcolsep}{1.7pt}
\renewcommand{\arraystretch}{0.95}
\scriptsize
\renewcommand\tabularxcolumn[1]{>{\centering\arraybackslash}m{#1}}
\begin{tabularx}{\textwidth}{|>{\centering\arraybackslash}m{1.8cm}|>{\centering\arraybackslash}m{1.55cm}|*{4}{X|}}
\hline
\rowcolor{lightgray!25}
\textbf{Method} & \textbf{Setting} & $\mathbf{N_{\mathrm{proc}}}$ & $\mathbf{N_{\mathrm{keep}}}$ & \textbf{Avg. Iter.} & \textbf{Time (s)}\\
\hline
Bisection & $\varepsilon$ & 123 & 9 & -- & 0.0005 \\ \hline
Subdivision $+$ Filter & $m=10$ & 100 & 9 & -- & 0.0004 \\ \hline
Constraint Propagation & $l=5$ & 25 & 3 & 1.12 & 0.0003 \\ \hline
Newton & $\varepsilon$ & 65 & 4 & 1 & 0.0010 \\ \hline
Krawczyk & $\varepsilon$ & 65 & 4 & 7.75 & 0.0026 \\ \hline
\end{tabularx}}
\end{table}

\begin{table}[h!]
\caption{Example 2 ($i=\{1,\dots,5\}$): Numerical experiment results ($X_0=[0,2]^5$). For bisection/Newton/Krawczyk, we set $\varepsilon=10^{-2}\,\min\bigl(\mathrm{diam}(X_0)\bigr)$.}
\label{tab:ex2_i5}
\centering
{\setlength{\tabcolsep}{1.7pt}
\renewcommand{\arraystretch}{0.95}
\scriptsize
\renewcommand\tabularxcolumn[1]{>{\centering\arraybackslash}m{#1}}
\begin{tabularx}{\textwidth}{|>{\centering\arraybackslash}m{1.8cm}|>{\centering\arraybackslash}m{1.55cm}|*{4}{X|}}
\hline
\rowcolor{lightgray!25}
\textbf{Method} & \textbf{Setting} & $\mathbf{N_{\mathrm{proc}}}$ & $\mathbf{N_{\mathrm{keep}}}$ & \textbf{Avg. Iter.} & \textbf{Time (s)}\\
\hline
Bisection & $\varepsilon$ & 19458141 & 7882977 & -- & 137.8984 \\ \hline
Subdivision $+$ Filter & $m=10$ & 100000 & 323 & -- & 0.7166 \\ \hline
Constraint Propagation & $l=10$ & 100000 & 243 & 1 & 35.8382 \\ \hline
Newton & $\varepsilon$ & 19457249 & 7882977 & 1 & 523.6390 \\ \hline
Krawczyk & $\varepsilon$ & 19457249 & 7882977 & 96.2 & 449.2722 \\ \hline
\end{tabularx}}
\end{table}
\begin{table}[h!]
\caption{Example 2 ($i=\{1,\dots,10\}$): Numerical experiment results ($X_0=[0,2]^{10}$). For bisection/Newton/Krawczyk, we set $\varepsilon=10^{-1}\,\min\bigl(\mathrm{diam}(X_0)\bigr)$. For Newton/Krawczyk, Avg.~Iter.=0 indicates that the method immediately fell back to subdivision (e.g., due to failure of the Jacobian-based verification/preconditioning test) and therefore performed no inner iterations on the processed boxes.}
\label{tab:ex2_i10}
\centering
{\setlength{\tabcolsep}{1.7pt}
\renewcommand{\arraystretch}{0.95}
\scriptsize
\renewcommand\tabularxcolumn[1]{>{\centering\arraybackslash}m{#1}}
\begin{tabularx}{\textwidth}{|>{\centering\arraybackslash}m{1.8cm}|>{\centering\arraybackslash}m{1.55cm}|*{4}{X|}}
\hline
\rowcolor{lightgray!25}
\textbf{Method} & \textbf{Setting} & $\mathbf{N_{\mathrm{proc}}}$ & $\mathbf{N_{\mathrm{keep}}}$ & \textbf{Avg. Iter.} & \textbf{Time (s)}\\
\hline
Bisection & $\varepsilon$ & 87640737 & 25887072 & -- & 1342.3153 \\ \hline
Subdivision $+$ Filter & $m=5$ & 9765625 & 10449 & -- & 87.3882 \\ \hline
Constraint Propagation & $l=5$ & 9765625 & 1024 & 1 & 638.7728 \\ \hline
Newton & $\varepsilon$ & 87640737 & 25887072 & 0 & 4043.2741 \\ \hline
Krawczyk & $\varepsilon$ & 87640737 & 25887072 & 0 & 4149.7953 \\ \hline
\end{tabularx}}
\end{table}

\subsubsection{Comparative discussion}
\label{subsubsec:derived_metrics}
We interpret the results in Tables~\ref{tab:runtime_vs_volume}, \ref{tab:ex1_n5}, \ref{tab:ex1_n10}, \ref{tab:ex2_i2}, \ref{tab:ex2_i5}, and \ref{tab:ex2_i10} through two complementary lenses: (i) the total workload, as measured by $N_{\mathrm{proc}}$ and $N_{\mathrm{keep}}$, and (ii) the average cost per processed box, estimated a posteriori by $T_{\mathrm{med}}/N_{\mathrm{proc}}$. A closely related diagnostic is the keep-rate $N_{\mathrm{keep}}/N_{\mathrm{proc}}$, which indicates how readily a method can certify infeasibility or contract boxes before further subdivision.

Subdivision$+$filter exhibits the characteristic behavior of fixed-grid methods: for a fixed $m$, the workload is essentially determined by the grid size $m^n$ (up to minor bookkeeping), and hence is largely insensitive to the size of $X_0$. This is clearly visible in Table~\ref{tab:runtime_vs_volume}, where subdivision$+$filter and constraint propagation process the same number of boxes for $V_1=[0,10]^2$ and $V_2=[0,20]^2$, and the corresponding runtimes change only marginally. At the same time, their retained-box counts decrease as $X_0$ is enlarged, indicating that the additional region is largely excluded at coarse resolution (i.e., the effective feasible region does not expand commensurately with $\mathrm{Vol}(X_0)$).

In the $5$D instance of Example~1 (Table~\ref{tab:ex1_n5}), the adaptive methods already reduce the processed-box workload relative to uniform subdivision: Newton/Krawczyk process $N_{\mathrm{proc}}=1{,}361$ boxes (with $N_{\mathrm{keep}}=1$) versus $N_{\mathrm{proc}}=3{,}125$ for subdivision $+$ filter ($N_{\mathrm{keep}}=31$) and $N_{\mathrm{proc}}=3{,}125$ for constraint propagation ($N_{\mathrm{keep}}=1$). However, the per-box overhead is visible at this dimension: Newton/Krawczyk are slower than the fixed-grid subdivision$+$filter run (0.21--0.42~s versus 0.063~s) despite the smaller $N_{\mathrm{proc}}$.

In the $10$D instance of Example~1 (Table~\ref{tab:ex1_n10}), interval Newton and Krawczyk process substantially fewer boxes than bisection (e.g., $N_{\mathrm{proc}}=330{,}277$ versus $5{,}749{,}985$). Although each processed box is more expensive due to interval Jacobians and validated linear algebra, the reduction in $N_{\mathrm{proc}}$ dominates, yielding close to an order-of-magnitude reduction in wall-clock time relative to bisection.

In Example~2 with $i=\{1,\dots,5\}$ (Table~\ref{tab:ex2_i5}), Newton and Krawczyk retain the same number of boxes as bisection ($N_{\mathrm{keep}}=7{,}882{,}977$) and process essentially the same workload ($N_{\mathrm{proc}}\approx 1.95\times 10^7$), indicating that the additional verification/contraction steps do not lead to materially earlier pruning under the chosen uncertainty and tolerances. The higher per-box overhead therefore translates directly into increased runtime. For this example and parameter regime, Newton/Krawczyk behave as ``bisection plus overhead,'' and the keep-rate remains high, reflecting an intrinsically harder exclusion problem.

Constraint propagation can reduce the number of processed boxes dramatically (e.g., Table~\ref{tab:ex2_i5}), but its total runtime depends on the per-box contractor cost and the number of contraction iterations required to reach a fixed point. Consequently, a smaller $N_{\mathrm{proc}}$ does not necessarily imply a smaller $T_{\mathrm{med}}$.
The reported results reflect the practical regimes considered in our study; however, not all methods are run at identical parameter settings (e.g., different $\varepsilon$ or $m$ across rows in Table~\ref{tab:ex1_n10}). The tables should therefore be read as indicative of relative workload and dominant cost drivers, rather than as a strict ``like-for-like'' comparison at a single accuracy target.

Across the experiments, the dominant determinant of runtime is whether a method can materially reduce $N_{\mathrm{proc}}$ by certifying infeasibility or contracting boxes before further subdivision. When pruning is strong (Example~1, $n=10$), contraction-based methods (Newton/Krawczyk) can be substantially faster than bisection despite higher per-box costs. Conversely, when pruning is weak (Example~2, $i=\{1,\dots,5\}$), simpler methods with lower per-box overhead are preferable. Fixed-grid variants provide predictable work budgets controlled by $m$, but their cost grows rapidly with dimension through the $m^n$ dependence. These patterns mirror the structure of the worst-case bounds derived above and motivate the practical recommendations summarized in the conclusions.

\endgroup
\section{Conclusions}
We developed an algorithm-level worst-case time and space complexity framework for validated interval methods applied to uncertain nonlinear systems, with explicit dependence on the initial search region size (e.g., $\mathrm{Vol}(X_0)$), target tolerance $\varepsilon$, and the cost of validated primitives (inclusion-function and Jacobian evaluations, and interval linear algebra). The resulting bounds provide a principled baseline for assessing scalability and for comparing subdivision-based and derivative-based strategies.

For subdivision-based methods, we established worst-case growth proportional to the number of boxes required to reach tolerance, which scales as $\mathrm{Vol}(X_0)/\varepsilon^n$ in the uniform subdivision model. In particular, recursive branch-and-bound schemes such as interval bisection incur exponential growth in time and memory as dimension increases. Subdivision$+$filter improves constant factors and reduces recursion overhead by using an explicit worklist traversal.
For derivative-based methods, we showed that interval Newton and interval Krawczyk have worst-case time complexity of order $O\big(N_{\mathrm{it}}\, n^3 \,\mathrm{Vol}(X_0)/\varepsilon^n\big)$, reflecting $n\times n$ matrix operations repeated for at most $N_{\mathrm{it}}$ iterations per box. Although they share the same leading-order dependence, Krawczyk typically has smaller constants because it replaces interval-matrix inversion by inversion of a real-valued midpoint matrix. We also highlighted that naive interval determinant/inverse computation via Laplace expansion is intractable, underscoring the importance of specialized interval linear algebra routines.

Our numerical experiments on two application-motivated biochemical steady-state models (a Hill-type regulatory network and an enzyme-saturation-based winner-take-all circuit) complement the worst-case analysis by quantifying observed workloads and runtimes in dimensions $n\in\{2,5,10\}$, including instances that process up to $8.76\times 10^7$ boxes. The results illustrate when contraction and verification reduce the effective search region (yielding large reductions in $N_{\mathrm{proc}}$) and when, for the considered examples and parameter regimes, derivative-based methods behave as ``bisection plus overhead'' because pruning is weak.

We emphasize that these results are worst-case guarantees for the specific algorithms and primitives analyzed and are not lower bounds for validated enclosure problems. Improved inclusion functions, preconditioning, adaptive subdivision rules, and more sophisticated interval linear algebra may reduce practical costs and, in some cases, change asymptotic behavior.

Several directions appear particularly promising: (i) incorporating adaptive subdivision and priority rules to better exploit contraction, (ii) hybrid solvers that combine contractors with Newton/Krawczyk steps while using subdivision$+$filter rather than recursion, (iii) sharper interval extensions (e.g., reduced dependency) to limit overestimation, and (iv) parallel implementations that preserve validation guarantees. We expect the framework and bounds presented here to support both the analysis and the design of scalable, validated solvers in high-dimensional, uncertainty-driven applications.
\appendix

\section*{Appendix: Operation counts for basic interval arithmetic}\label{app:interval_arithmetic}
In this appendix we derive elementary operation counts for the basic interval arithmetic operations used throughout the paper. Let $A=[\underline{a},\overline{a}]$ and $B=[\underline{b},\overline{b}]$ be closed intervals with real endpoints. We count arithmetic operations on real numbers (addition, subtraction, multiplication, division, and comparisons) and treat each such primitive as having constant cost.

\subsection{Addition and subtraction}
Interval addition and subtraction are defined componentwise:
\begin{align}
A+B &= [\underline{a}+\underline{b},\; \overline{a}+\overline{b}],\\
A-B &= [\underline{a}-\overline{b},\; \overline{a}-\underline{b}].
\end{align}
Each requires two real additions/subtractions, hence constant cost.

\subsection{Multiplication}
For multiplication, one forms the four endpoint products
\[
P=\{\underline{a}\underline{b},\;\underline{a}\overline{b},\;\overline{a}\underline{b},\;\overline{a}\overline{b}\}
\]
and sets
\[
A\times B = [\min\{P\},\; \max \{P\}].
\]
This requires four real multiplications and then computing $\min$ and $\max$ of four values (three comparisons each), for constant cost.

\subsection{Division}
Assume $0\notin B$. Division is defined via multiplication by the reciprocal interval,
\[
A\div B = A \times \left[\frac{1}{\overline{b}},\;\frac{1}{\underline{b}}\right].
\]
Computing the reciprocal interval requires two real divisions. The subsequent multiplication uses the procedure above with four real multiplications and constant-time min/max comparisons. Therefore, interval division has a constant cost when $0\notin B$. (If $0\in B$, extended interval arithmetic is required and the operation may return a union of intervals; we do not analyse that case here.)

\section*{Acknowledgments}
The authors express their sincere gratitude to Prof. Ragesh Jaiswal for his constructive feedback.

\begingroup
\scriptsize
\makeatletter
\@ifundefined{bibsep}{}{\setlength{\bibsep}{0pt}}
\makeatother
\bibliographystyle{ieeetr}
\bibliography{references}
\endgroup
\end{document}